\theoremstyle{plain}
\numberwithin{equation}{section}
\theoremstyle{plain}
\newtheorem{thm}{Theorem}[section]
\theoremstyle{plain}
\theoremstyle{definition}
\theoremstyle{definition}
\newtheorem{example}[thm]{Example}
\theoremstyle{plain}
\newtheorem{lem}[thm]{Lemma}
\theoremstyle{plain}
\newtheorem{cor}[thm]{Corollary}
\theoremstyle{definition}
\begin{document}
\title{Maximum logarithmic derivative bound on quantum state estimation as a dual of the Holevo bound}
\author{
	Koichi Yamagata%
	\thanks{koichi.yamagata@uec.ac.jp}\\
	{The University of Electro-Communications Department of Informatics,}\\
		{1-5-1, Chofugaoka, Chofu, Tokyo 182-8585, Japan} 
}%
\date{}

\maketitle

\begin{abstract}
In quantum estimation theory, the Holevo bound is known as a lower bound of weighed traces of covariances of unbiased estimators. 
The Holevo bound is defined by a solution of a minimization problem, and
in general, explicit solution is not known. When the dimension of
Hilbert space is two and the number of parameters is two, a explicit
form of the Holevo bound was given by Suzuki. In this paper, we focus
on a logarithmic derivative lies between the symmetric logarithmic
derivative (SLD) and the right logarithmic derivative (RLD) parameterized
by $\beta\in[0,1]$ to obtain lower bounds of weighted trace of covariance
of unbiased estimator. We introduce the maximum logarithmic derivative
bound as the maximum of bounds with respect to $\beta$. We show that
all monotone metrics induce lower bounds, and the maximum logarithmic
derivative bound is the largest bound among them. We show that the
maximum logarithmic derivative bound has explicit solution when the $d$ dimensional model has $d+1$ dimensional $\mathcal{D}$ invariant extension
of the SLD tangent space. Furthermore, when $d=2$, we show that the
maximization problem to define the maximum logarithmic derivative
bound is the Lagrangian duality of the minimization problem to define
Holevo bound, and is the same as the Holevo bound. This explicit solution
is a generalization of the solution for a two dimensional Hilbert
space given by Suzuki. We give also examples of families of quantum
states to which our theory can be applied not only for two dimensional
Hilbert spaces. 
\end{abstract}

\maketitle

\if 0
\theoremstyle{plain}
\newtheorem{thm}{Theorem}
\theoremstyle{definition}
\newtheorem{defn}[thm]{Definition}
\theoremstyle{plain}
\newtheorem{cor}[thm]{Corollary}
\fi

\if 0
\theoremstyle{defn}
\newtheorem{defn}{Definition}
\fi

\newcommand{\argmax}{\mathop{\rm arg~max}\limits}
\newcommand{\argmin}{\mathop{\rm arg~min}\limits}

\global\long\def\R{\mathbb{R}}%
\global\long\def\C{\mathbb{C}}%
\global\long\def\H{\mathcal{H}}%
\global\long\def\X{\mathcal{X}}%
\global\long\def\M{\mathcal{M}}%
\global\long\def\S{\mathcal{S}}%
\global\long\def\K{\mathcal{K}}%
\global\long\def\F{\mathcal{F}}%
\global\long\def\tr{{\rm tr\,}}%
\global\long\def\Tr{{\rm Tr\,}}%
\global\long\def\D{\mathcal{D}}%
\global\long\def\B{\mathcal{B}}%
\global\long\def\T{\mathcal{T}}%
\global\long\def\P{\mathcal{P}}%
\global\long\def\bra#1{\left\langle #1\right|}%
\global\long\def\ket#1{\left|#1\right\rangle }%
\global\long\def\braket#1#2{\left\langle #1\mid#2\right\rangle }%
\global\long\def\bR{\mathbf{R}}%
\global\long\def\bL{\mathbf{L}}%
\global\long\def\ii{\sqrt{-1}}%

\if 0
\begin{quotation}
The ``lead paragraph'' is encapsulated with the \LaTeX\ 
\verb+quotation+ environment and is formatted as a single paragraph before the first section heading. 
(The \verb+quotation+ environment reverts to its usual meaning after the first sectioning command.) 
Note that numbered references are allowed in the lead paragraph.
The lead paragraph will only be found in an article being prepared for the journal \textit{Chaos}.
\end{quotation}
\fi

\section{Introduction}

Let $\S=\left\{ \rho_{\theta};\,\theta\in\Theta\subset\R^{d}\right\} $
be a smooth parametric family of density operators on a Hilbert space
$\H$. An estimator is represented by a pair $(M,\hat{\theta})$ of
a POVM $M$ taking values on any finite set $\X$ and a map $\hat{\theta}:\X\to\Theta$.
An estimator $(M,\hat{\theta})$ is called unbiased if
\begin{equation}
E_{\theta}[M,\hat{\theta}]=\sum_{x\in\X}\hat{\theta}(x)\Tr\rho_{\theta}M_{x}=\theta\label{eq:unbias}
\end{equation}
is satisfied for all $\theta\in\Theta$. An estimator $(M,\hat{\theta})$
is called locally unbiased\cite{holevo} at a given point $\theta_{0}\in\Theta$
if the condition (\ref{eq:unbias}) is satisfied around $\theta_{0}$
up to the first order of the Taylor expansion, i.e.,
\begin{align}
\sum_{x\in\X}\hat{\theta}^{i}(x)\Tr\rho_{\theta_{0}}M_{x} & =\theta_{0}^{i}\qquad(i=1,\dots,d),\\
\sum_{x\in\X}\hat{\theta}^{i}(x)\Tr\partial_{j}\rho_{\theta_{0}}M_{x} & =\delta_{j}^{i}\qquad(i,j=1,\dots,d),
\end{align}
where $\partial_{j}\rho_{\theta_{0}}=\left.\frac{\partial}{\partial\theta^{j}}\rho_{\theta}\right|_{\theta=\theta_{0}}$.
It is well-known that the covariance matrix $V_{\theta_{0}}[M,\hat{\theta}]$
of an locally unbiased estimator $(M,\hat{\theta})$ at $\theta_{0}$
satisfies the following inequalities:

\begin{equation}
V_{\theta_{0}}[M,\hat{\theta}]\geq J_{\theta_{0}}^{(S)^{-1}},
\end{equation}
\begin{equation}
V_{\theta_{0}}[M,\hat{\theta}]\geq J_{\theta_{0}}^{(R)^{-1}},
\end{equation}
where $J_{\theta_{0}}^{(S)}:=\left[{\rm Re}\,(\Tr\rho_{\theta_{0}}L_{i}^{(S)}L_{j}^{(S)})\right]_{ij}$
is the symmetric logarithmic derivative (SLD) Fisher information matrix
at $\theta_{0}$ with SLDs $L_{i}^{(S)}$ ($1\leq i\leq d$) defined
by
\begin{equation}
\partial_{i}\rho_{\theta_{0}}=\frac{1}{2}\left(\rho_{\theta_{0}}L_{i}^{(S)}+L_{i}^{(S)}\rho_{\theta_{0}}\right),
\end{equation}
and $J_{\theta_{0}}^{(R)}:=\left[\Tr L_{i}^{(R)^{*}}\rho_{\theta_{0}}L_{j}^{(R)}\right]_{ij}$
is the right logarithmic derivative (RLD) Fisher information matrix
at $\theta_{0}$ with RLDs $L_{i}^{(R)}$ ($1\leq i\leq d$) defined
by
\begin{equation}
\partial_{i}\rho_{\theta_{0}}=\rho_{\theta_{0}}L_{i}^{(R)}.
\end{equation}
These matrix inequalities imply
\begin{equation}
\Tr GV_{\theta_{0}}[M,\hat{\theta}]\geq\Tr GJ_{\theta_{0}}^{(S)^{-1}}=:C_{\theta_{0},G}^{(S)},
\end{equation}
\begin{equation}
\Tr GV_{\theta_{0}}[M,\hat{\theta}]\geq\Tr GJ_{\theta_{0}}^{(R)^{-1}}+\Tr\left|\sqrt{G}{\rm Im}J_{\theta_{0}}^{(R)^{-1}}\sqrt{G}\right|=:C_{\theta_{0},G}^{(R)},
\end{equation}
for any $d\times d$ real positive matrix $G$, because
\begin{equation}
\min_{V}\{\Tr GV;\,V\geq J,V\text{ is a real matrix}\}=\Tr GJ+\Tr\left|\sqrt{G}{\rm Im}J\sqrt{G}\right|\label{eq:trabs}
\end{equation}
for any positive complex matrix $J$ (see Appendix \ref{sec:trabs_proof}
for the proof). 

A tighter lower bound of $\Tr GV_{\theta_{0}}[M,\hat{\theta}]$ than
the SLD bound $C_{\theta_{0},G}^{(S)}$ and the RLD bound $C_{\theta_{0},G}^{(R)}$
is known as the Holevo bound \cite{holevo} defined by
\begin{align}
C_{\theta_{0},G}^{(H)} & :=\min_{V,B}\left\{ \Tr GV;\,V\text{ is a real matrix such that }V\geq Z(B),\,Z_{ij}(B)=\Tr\rho_{\theta_{0}}B_{j}B_{i},\right.\\
 & \qquad B_{1},\dots,B_{d}\text{ are Hermitian operators on \ensuremath{\H\ }such that }\Tr\partial_{i}\rho_{\theta_{0}}B_{j}=\delta_{ij}\},\label{eq:holevo_bound}
\end{align}
(see Appendix \ref{sec:Holevo_proof} for the derivation and the proof)
and it satisfies
\begin{equation}
\Tr GV_{\theta_{0}}[M,\hat{\theta}]\geq C_{\theta_{0},G}^{(H)}\geq\max(C_{\theta_{0},G}^{(S)},C_{\theta_{0},G}^{(R)}).
\end{equation}
It is known that the Holevo bound is asymptotically achievable in
theory of quantum local asymptotic normality \cite{YFG,qlan2,guta}.
Note that the minimization problem over $V$ in (\ref{eq:holevo_bound})
is explicitly solved by using (\ref{eq:trabs}), and
\begin{align}
C_{\theta_{0},G}^{(H)} & =\min_{B}\left\{ \Tr GZ(B)+\Tr\left|\sqrt{G}{\rm Im}Z(B)\sqrt{G}\right|;\,Z_{ij}(B)=\Tr\rho_{\theta_{0}}B_{j}B_{i},\right.\label{eq:holevo_bound2}\\
 & \qquad B_{1},\dots,B_{d}\text{ are Hermitian operators on \ensuremath{\H\ }such that }\Tr\partial_{i}\rho_{\theta_{0}}B_{j}=\delta_{ij}\}.\nonumber 
\end{align}
However, the minimization problem over $B$ in (\ref{eq:holevo_bound2})
is not trivial in general. Suzuki\cite{suzukiHolevo} showed that,
when $\dim\H=2$ and $d=2$, the Holevo bound can be represented explicitly
by using the SLD bound and the RLD bound as
\begin{equation}
C_{\theta_{0},G}^{(H)}=\begin{cases}
C_{\theta_{0},G}^{(R)} & \text{if }C_{\theta_{0},G}^{(R)}\geq\frac{C_{\theta_{0},G}^{(Z)}+C_{\theta_{0},G}^{(S)}}{2},\\
C_{\theta_{0},G}^{(R)}+S_{\theta_{0},G} & \text{otherwise, }
\end{cases}\label{eq:suzuki_bound_ori}
\end{equation}
where $C_{\theta_{0},G}^{(Z)}$ and $S_{\theta_{0},G}$ are positive
values defined by
\begin{equation}
C_{\theta_{0},G}^{(Z)}:=\Tr GZ(L^{(S)})+\Tr\left|\sqrt{G}{\rm Im}Z(L^{(S)})\sqrt{G}\right|,
\end{equation}
\begin{equation}
S_{\theta_{0},G}:=\frac{\left[\frac{1}{2}(C_{\theta_{0},G}^{(Z)}+C_{\theta_{0},G}^{(S)})-C_{\theta_{0},G}^{(R)}\right]^{2}}{C_{\theta_{0},G}^{(Z)}-C_{\theta_{0},G}^{(R)}}.
\end{equation}

In this paper, we focus on a logarithmic derivative $L_{i}^{(\beta)}$
lies between SLD $L_{i}^{(S)}$ and RLD $L_{i}^{(R)}$, that defined
by
\begin{equation}
\partial_{i}\rho_{\theta_{0}}=\frac{(1+\beta)}{2}\rho_{\theta_{0}}L_{i}^{(\beta)}+\frac{(1-\beta)}{2}L_{i}^{(\beta)}\rho_{\theta_{0}}
\end{equation}
with $\beta\in[0,1]$. When $\beta=0$, $L_{i}^{(\beta)}$ coincides
with SLD $L_{i}^{(S)}$, and when $\beta=1$, $L_{i}^{(\beta)}$ coincides
with RLD $L_{i}^{(R)}$. The Fisher information matrix with respect
to $\left\{ L_{i}^{(\beta)}\right\} _{i=1}^{d}$ is
\begin{equation}
J_{\theta_{0}}^{(\beta)}:=\left[\Tr\partial_{i}\rho_{\theta_{0}}L_{j}^{(\beta)}\right]_{ij},
\end{equation}
and we show that the inequalities
\begin{equation}
V_{\theta_{0}}[M,\hat{\theta}]\geq J_{\theta_{0}}^{(\beta)^{-1}}
\end{equation}
and
\begin{equation}
\Tr GV_{\theta_{0}}[M,\hat{\theta}]\geq\Tr GJ_{\theta_{0}}^{(\beta)^{-1}}+\Tr\left|\sqrt{G}{\rm Im}J_{\theta_{0}}^{(\beta)^{-1}}\sqrt{G}\right|=:C_{\theta_{0},G}^{(\beta)},
\end{equation}
for the covariance matrix $V_{\theta_{0}}[M,\hat{\theta}]$ of any
locally unbiased estimator $(M,\hat{\theta})$ at $\theta_{0}$. We
call
\begin{equation}
\max_{0\leq\beta\leq1}C_{\theta_{0},G}^{(\beta)}\label{eq:max_prob}
\end{equation}
a maximum logarithmic derivative bound. 
More generally, monotone metrics
introduced by Petz can also induce Fisher information matrices and
lower bounds of $\Tr GV_{\theta_{0}}[M,\hat{\theta}]$\cite{monotone,cptni}.
We show that the maximum logarithmic derivative bound is the largest
bound among them.

The maximization problem (\ref{eq:max_prob}) is also not trivial
in general. However, when the model $\S=\left\{ \rho_{\theta};\,\theta\in\Theta\subset\R^{d}\right\} $
has $d+1$ dimensional real space $\tilde{\T}\supset{\rm span}_{\R}\{L_{i}^{(S)}\}_{i=1}^{d}$
such that $\D_{\rho_{\theta_{0}}}(\tilde{\T})\subset\tilde{\T}$ at
$\theta_{0}\in\Theta$, we show that $\max_{0\leq\beta\leq1}C_{\theta_{0},G}^{(\beta)}$
has explicit solution:
\begin{equation}
\max_{0\leq\beta\leq1}C_{\theta_{0},G}^{(\beta)}=\begin{cases}
C_{\theta_{0},G}^{(1)} & \text{if }\hat{\beta}\geq1,\\
C_{\theta_{0},G}^{(\hat{\beta})} & \text{\text{otherwise}},
\end{cases}\label{eq:mld_expli}
\end{equation}
with
\begin{equation}
\hat{\beta}=\begin{cases}
\frac{\Tr\left|\sqrt{G}{\rm Im}J_{\theta_{0}}^{(R)^{-1}}\sqrt{G}\right|}{2\Tr G\left\{ J_{\theta_{0}}^{(S)^{-1}}-{\rm Re}(J_{\theta_{0}}^{(R)^{-1}})\right\} } & \text{if }J_{\theta_{0}}^{(S)^{-1}}\not={\rm Re}(J_{\theta_{0}}^{(R)^{-1}}),\\
\infty & \text{\text{otherwise}},
\end{cases}
\end{equation}
where $\D_{\rho_{\theta_{0}}}$ is the commutation operator (see Section
\ref{sec:rewrite_holevo}). Furthermore, when $d=2$, we show that
the maximization problem (\ref{eq:max_prob}) is the Lagrangian duality
of the minimization problem to define Holevo bound, thus
\begin{equation}
\max_{0\leq\beta\leq1}C_{\theta_{0},G}^{(\beta)}=C_{\theta_{0},G}^{(H)}.
\end{equation}
Actually, the explicit solution (\ref{eq:mld_expli}) is a generalization
of the solution (\ref{eq:suzuki_bound_ori}) for $\dim\H=2$ 
{\color{black}(see Appendix \ref{sec:suzuki_mld})}.

This paper is organized as follows. In Section \ref{sec:mld}, we
introduce logarithmic derivatives and Fisher information matrices
induced by monotone metrics, and we derive the maximum logarithmic
derivative bound. In Section \ref{sec:rewrite_holevo}, we introduce
a commutation operator $\D$, and the Holevo bound is rewritten in
simpler form by using a $\D$ invariant space of Hermitian operators.
In Section \ref{sec:codimension}, we show that $\max_{0\leq\beta\leq1}C_{\theta_{0},G}^{(\beta)}$
has explicit solution (\ref{eq:mld_expli}) when the $d$ dimensional
model has $d+1$ dimensional $\D$ invariant extension of SLD tangent
space. Further, we show the maximum logarithmic derivative bound is
the same as the Holevo bound if $d=2$. At the end of the section,
we give examples of families of quantum states to which our theory
can be applied not only for $\dim\H=2$. Section \ref{sec:Conclusion}
is the conclusion. For the reader’s convenience, some additional material
is presented in the Appendix. In Appendix \ref{sec:trabs_proof},
a proof of (\ref{eq:trabs}) is given. In Appendix \ref{sec:Holevo_proof},
a brief proof and derivation of Holevo bound is presented. In Appendix
\ref{sec:schur}, the Schur complement, which plays an important role,
is introduced. 
In Appendix \ref{sec:Dinv}, details of the commutation
operator $\D$ as a tool for multiple inner products is described.
{\color{black} In Appendix \ref{sec:suzuki_mld}, 
it is shown that the explicit form (\ref{eq:suzuki_bound_ori})
for $\dim\H=2$ and $d=2$ can be derived from (\ref{eq:mld_expli}). 
}

\section{Maximum logarithmic derivative bound\label{sec:mld}}

Let $\S=\left\{ \rho_{\theta};\,\theta\in\Theta\subset\R^{d}\right\} $
be a smooth parametric family of density operators on a finite dimensional
Hilbert space $\H$. The covariance matrix $V_{\theta_{0}}[M,\hat{\theta}]$
of an locally unbiased estimator $(M,\hat{\theta})$ at $\theta_{0}$
satisfies the classical Cram\'{e}r Rao inequality,
\begin{equation}
V_{\theta_{0}}[M,\hat{\theta}]\geq J_{\theta_{0}}^{(M)^{-1}}
\end{equation}
where
\begin{equation}
J_{\theta_{0}}^{(M)}=\left[\sum_{x\in\X}\frac{\left(\Tr\partial_{i}\rho_{\theta_{0}}M_{x}\right)\left(\Tr\partial_{j}\rho_{\theta_{0}}M_{x}\right)}{\Tr\rho_{\theta_{0}}M_{x}}\right]_{ij}
\end{equation}
is the classical Fisher information matrix with respect to the POVM
$M$. The equality is achieved when
\begin{equation}
\hat{\theta}^{i}(x)=\theta_{0}^{i}+\sum_{j=1}^{d}\left[J_{\theta_{0}}^{(M)^{-1}}\right]^{ij}\frac{\Tr\partial_{j}\rho_{\theta_{0}}M_{x}}{\Tr\rho_{\theta_{0}}M_{x}}.
\end{equation}
Thus the minimization of $\Tr GV_{\theta_{0}}[M,\hat{\theta}]$ is
reduced to the minimization of $\Tr GJ_{\theta_{0}}^{(M)^{-1}}$ for
any $d\times d$ real positive matrix $G$. In this section, we consider
lower bounds of $\Tr GJ_{\theta_{0}}^{(M)^{-1}}$ directly induced
by monotone metrics. 

Let $P:(0,\infty)\rightarrow(0,\infty)$ be an operator monotone function
such that $P(1)=1$ \cite{Bhatia}. Let $\B(\H)$ be the set of all
linear operators on $\H$. A monotone metric at $\theta_{0}\in\Theta$
is an inner product $K_{\rho_{\theta_{0}}}^{(P)}(\cdot,\cdot)$ on
$\B(\H)$ defined by
\begin{equation}
K_{\rho_{\theta_{0}}}^{(P)}(X,Y)=\Tr X^{*}\left[(\bR_{\rho_{\theta_{0}}}P(\bL_{\rho_{\theta_{0}}}\bR_{\rho_{\theta_{0}}}^{-1}))^{-1}Y\right],
\end{equation}
where $\bL_{\rho_{\theta_{0}}}$ and $\bR_{\rho_{\theta_{0}}}$ are
super operators on $\B(\H)$ defined by
\begin{align}
\bL_{\rho_{\theta_{0}}}(X) & =\rho_{\theta_{0}}X,\\
\bR_{\rho_{\theta_{0}}}(X) & =X\rho_{\theta_{0}},
\end{align}
with a strictly positive operator $\rho_{\theta_{0}}\in\S$ \cite{monotone,cptni}.
Note that $\bL_{\rho_{\theta_{0}}}$ and $\bR_{\rho_{\theta_{0}}}$
are commutative so the super operator $(\bR_{\rho_{\theta_{0}}}P(\bL_{\rho_{\theta_{0}}}\bR_{\rho_{\theta_{0}}}^{-1}))^{-1}$
is well-defined. The monotone metric $K_{\rho_{\theta_{0}}}^{(P)}(\cdot,\cdot)$
has the monotonicity
\begin{equation}
K_{\rho_{\theta_{0}}}^{(P)}(X,X)\geq K_{T(\rho_{\theta_{0}})}^{(P)}(T(X),T(X))\label{eq:monotonicity}
\end{equation}
under any channel $T:\B(\H)\to\B(\H')$ mapping to another Hilbert
space $\H'$. 

The logarithmic derivative $L_{i}^{(P)}$ and the Fisher information
matrix $J_{\theta_{0}}^{(P)}$ with respect to $P$ are
\begin{align}
L_{i}^{(P)} & =(\bR_{\theta_{0}}P(\bL_{\theta_{0}}\bR_{\theta_{0}}^{-1}))^{-1}\partial_{i}\rho_{\theta_{0}}\qquad(i=1,\dots,d),\\
J_{\theta_{0}}^{(P)} & =\left[K_{\rho_{\theta_{0}}}^{(P)}(\partial_{i}\rho_{\theta_{0}},\partial_{j}\rho_{\theta_{0}})\right]_{ij}=\left[\Tr\partial_{i}\rho_{\theta_{0}}L_{j}^{(P)}\right]_{ij}.
\end{align}
Because a {\color{black} linear} map $T^{(M)}:X\mapsto{\rm Diag}(\Tr XM_{1},\Tr XM_{2},\dots,\Tr XM_{|\X|})$
($X\in\B(\H)$) is a quantum channel, for any POVM $M$ taking values
on $\X=\{1,2,\dots,|\X|\},$ the monotonicity (\ref{eq:monotonicity})
of $K_{\rho_{\theta_{0}}}^{(P)}(\cdot,\cdot)$ induces a matrix inequality
\begin{equation}
J_{\theta_{0}}^{(P)}\geq\left[K_{T^{(M)}(\rho_{\theta_{0}})}^{(P)}(T^{(M)}(\partial_{i}\rho_{\theta_{0}}),T^{(M)}(\partial_{j}\rho_{\theta_{0}}))\right]_{ij}=J_{\theta_{0}}^{(M)},\label{eq:monotonicity_POVM}
\end{equation}
where ${\rm Diag}(\cdots)$ indicates a diagonal matrix.  
The inequality
(\ref{eq:monotonicity_POVM}) implies
\begin{align}
\Tr GJ_{\theta_{0}}^{(M)^{-1}} & \geq\min\{\Tr GV;\,V\geq J_{\theta_{0}}^{(P)^{-1}},V\text{ is a real matrix}\}\\
 & =\Tr GJ_{\theta_{0}}^{(P)^{-1}}+\Tr\left|\sqrt{G}{\rm Im}J_{\theta_{0}}^{(P)^{-1}}\sqrt{G}\right|=:C_{\theta_{0},G}^{(P)},
\end{align}
due to (\ref{eq:trabs}). 
{\color{black}
To obtain a tighter bound, we consider 
maximizing $C_{\theta_{0},G}^{(P)}$ with respect to $P$.
In existing studies, such maximization was considered
in several models, 
and SLD or RLD bounds were derived\cite{block_pure}.
In this section, we consider maximization of
$C_{\theta_{0},G}^{(P)}$ in general models.
}

In quantum state estimation, 
a family of linear functions
\begin{equation}
\F=
\left\{ P^{(\beta)}(x)=\frac{1+\beta}{2}x+\frac{1-\beta}{2};\,\beta\in[-1,1]\right\} \label{eq:beta_family},
\end{equation}
is particularly important for monotone metrics among operator monotone
functions
{\color{black}
because 
the operator monotone function $P$ which
maximize the lower bound $C^{(P)}_{\theta_0}$ is always 
in the family $\F$ of functions 
as we will show in Theorem \ref{thm:monotone_bound}.
}
We write $K_{\rho_{\theta_{0}}}^{(\beta)}$, $L_{i}^{(\beta)}$,
$J_{\theta_{0}}^{(\beta)}$, and $C_{\theta_{0},G}^{(\beta)}$ instead
of $K_{\rho_{\theta_{0}}}^{(P^{(\beta)})}$, $L_{i}^{(P^{(\beta)})}$,
$J_{\theta_{0}}^{(P^{(\beta)})}$, and $C_{\theta_{0},G}^{(P^{(\beta)})}$.
The $\beta$ logarithmic derivative $L_{i}^{(\beta)}$ is defined
by
\begin{equation}
\partial_{i}\rho_{\theta_{0}}=\frac{(1+\beta)}{2}\rho_{\theta_{0}}L_{i}^{(\beta)}+\frac{(1-\beta)}{2}L_{i}^{(\beta)}\rho_{\theta_{0}}.\label{eq:Lbeta_def}
\end{equation}
Let $\left\langle \cdot,\cdot\right\rangle ^{(\beta)}$ be an inner
product on $\B(\H)$ defined by
\begin{equation}
\left\langle X,Y\right\rangle ^{(\beta)}=\frac{1}{2}\Tr X^{*}\left\{ (1+\beta)\rho_{\theta_{0}}Y+(1-\beta)Y\rho_{\theta_{0}}\right\} .\label{eq:beta_inner}
\end{equation}
By using this inner product, the $\beta$ Fisher information matrix
can be written as
\begin{equation}
J_{\theta_{0}}^{(\beta)}=\left[K_{\rho_{\theta_{0}}}^{(\beta)}(\partial_{i}\rho_{\theta_{0}},\partial_{j}\rho_{\theta_{0}})\right]_{ij}=\left[\left\langle L_{i}^{(\beta)},L_{j}^{(\beta)}\right\rangle ^{(\beta)}\right]_{ij}.\label{eq:beta_fisher}
\end{equation}
Note that $L_{i}^{(0)}$ coincides with SLD $L_{i}^{(S)}$, and $L_{i}^{(1)}$
coincides with RLD $L_{i}^{(R)}$. 

{\color{black} Let us prove the optimality of the family $\F$ of functions.
Because any operator monotone function
$P:(0,\infty)\rightarrow(0,\infty)$ is differentiable and concave\cite{Bhatia},
there always exists $\beta\in[-1,1]$ such that 
$\left.\frac{{\rm d}}{{\rm d}x}P(x)\right|_{x=1}=\frac{1+\beta}{2}$
if $P(1)=1$.
Since the line $P^{(\beta)}(x)$ is a tangent to the concave function $P(x)$ at $x=1$, 
\begin{equation}
P(x)\leq P^{(\beta)}(x)
\end{equation}
for any $x\in(0,\infty)$, 
and this implies
\begin{equation}
J_{\theta_{0}}^{(P)}\geq J_{\theta_{0}}^{(\beta)}
\end{equation}
and
\begin{equation}
C_{\theta_{0},G}^{(\beta)}\geq C_{\theta_{0},G}^{(P)}.
\end{equation}
Further, $C_{\theta_{0},G}^{(\beta)}=C_{\theta_{0},G}^{(-\beta)}$
because $L_{i}^{(\beta)}$ is conjugate transpose of $L_{i}^{(-\beta)}$.
Therefore, we do not need to consider operator monotone functions
other than $P^{(\beta)}$ for $\beta\in[0,1]$. 
}

Collecting these results, we have the following theorem. 
\begin{thm}\label{thm:monotone_bound}
For any locally unbiased estimator $(M,\hat{\theta})$ at $\theta_{0}$,
a $d\times d$ real positive matrix $G$, and an operator monotone
function $P:(0,\infty)\to(0,\infty)$ such that $P(1)=1$ and $\left.\frac{{\rm d}}{{\rm d}x}P(x)\right|_{x=1}=\frac{1+\beta}{2}$
with $\beta\in[-1,1]$,
\begin{equation}
\Tr GV_{\theta_{0}}[M,\hat{\theta}]\geq\Tr GJ_{\theta_{0}}^{(M)^{-1}}\geq C_{\theta_{0},G}^{(\left|\beta\right|)}\geq C_{\theta_{0},G}^{(P)}.
\end{equation}
\end{thm}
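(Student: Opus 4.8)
The plan is to chain together the three estimates already assembled above, reading the displayed inequality from left to right. For the leftmost step I would simply combine the classical Cram\'er--Rao inequality $V_{\theta_0}[M,\hat\theta]\ge J_{\theta_0}^{(M)^{-1}}$ with positivity of $G$: since $G>0$ and $V_{\theta_0}[M,\hat\theta]-J_{\theta_0}^{(M)^{-1}}\ge0$ in the L\"owner order, taking the trace against $G$ gives $\Tr GV_{\theta_0}[M,\hat\theta]\ge\Tr GJ_{\theta_0}^{(M)^{-1}}$.

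For the middle inequality I would specialize the monotonicity bound \eqref{eq:monotonicity_POVM} to the operator monotone function $P^{(|\beta|)}$ (operator monotone because $|\beta|\in[0,1]$), which yields $J_{\theta_0}^{(|\beta|)}\ge J_{\theta_0}^{(M)}$. Inverting this L\"owner inequality gives $J_{\theta_0}^{(M)^{-1}}\ge J_{\theta_0}^{(|\beta|)^{-1}}$. Since $J_{\theta_0}^{(M)}$ is a real matrix, so is $J_{\theta_0}^{(M)^{-1}}$, hence $V:=J_{\theta_0}^{(M)^{-1}}$ is feasible for the minimization in \eqref{eq:trabs} taken at $J=J_{\theta_0}^{(|\beta|)^{-1}}$; consequently $\Tr GJ_{\theta_0}^{(M)^{-1}}\ge\min\{\Tr GV;\,V\ge J_{\theta_0}^{(|\beta|)^{-1}},\,V\text{ real}\}=C_{\theta_0,G}^{(|\beta|)}$.

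For the rightmost inequality I would reuse the optimality of $\F$ established in the paragraph preceding the theorem. Concavity and differentiability of $P$ together with $P(1)=1$ and $P'(1)=(1+\beta)/2$ give the tangent-line domination $P(x)\le P^{(\beta)}(x)$ on $(0,\infty)$, which lifts to $J_{\theta_0}^{(P)}\ge J_{\theta_0}^{(\beta)}$ and hence to $J_{\theta_0}^{(P)^{-1}}\le J_{\theta_0}^{(\beta)^{-1}}$. Because the real feasible set $\{V:\,V\ge J_{\theta_0}^{(P)^{-1}}\}$ then contains $\{V:\,V\ge J_{\theta_0}^{(\beta)^{-1}}\}$, the minimum in \eqref{eq:trabs} over the larger set is no larger, so $C_{\theta_0,G}^{(P)}\le C_{\theta_0,G}^{(\beta)}$. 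Finally the symmetry $C_{\theta_0,G}^{(\beta)}=C_{\theta_0,G}^{(-\beta)}$, which follows from $L_i^{(\beta)}$ being the adjoint of $L_i^{(-\beta)}$, lets me replace $\beta$ by $|\beta|$ and obtain $C_{\theta_0,G}^{(|\beta|)}\ge C_{\theta_0,G}^{(P)}$, completing the chain.

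The one genuinely substantive point is the lift from the scalar domination $P\le P^{(\beta)}$ to the L\"owner inequality $J_{\theta_0}^{(P)}\ge J_{\theta_0}^{(\beta)}$. I would argue that applying $P$ and $P^{(\beta)}$ to the self-adjoint super operator $\bL_{\rho_{\theta_0}}\bR_{\rho_{\theta_0}}^{-1}$ preserves the pointwise order via the functional calculus, that left multiplication by the positive commuting super operator $\bR_{\rho_{\theta_0}}$ preserves it, and that inversion reverses it, so that the quadratic forms satisfy $K_{\rho_{\theta_0}}^{(P)}(X,X)\ge K_{\rho_{\theta_0}}^{(\beta)}(X,X)$ for all $X\in\B(\H)$; the corresponding Gram matrices built from $\partial_i\rho_{\theta_0}$ then inherit the ordering. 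Everything else reduces to routine manipulation of the L\"owner order and the variational identity \eqref{eq:trabs}.
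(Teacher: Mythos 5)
Your proposal is correct and follows essentially the same route as the paper: classical Cram\'er--Rao for the first step, monotonicity of the metric under the measurement channel plus the variational identity (\ref{eq:trabs}) for the second, and the tangent-line domination $P\leq P^{(\beta)}$ together with the symmetry $C_{\theta_{0},G}^{(\beta)}=C_{\theta_{0},G}^{(-\beta)}$ for the third. The only difference is that you spell out the functional-calculus lift from the scalar inequality $P\leq P^{(\beta)}$ to the operator ordering of the metrics, which the paper states without detail; your justification of that step is sound.
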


From this theorem, we have an inequality
\begin{equation}
\Tr GV_{\theta_{0}}[M,\hat{\theta}]\geq\max_{0\leq\beta\leq1}C_{\theta_{0},G}^{(\beta)},
\end{equation}
and we call the RHS of this inequality the maximum logarithmic derivative
bound. 

\section{Equivalent expressions of Holevo bound\label{sec:rewrite_holevo}}

In this section, we give a simpler form of the Holevo bound by using
a commutation operator. Let $\S=\left\{ \rho_{\theta};\,\theta\in\Theta\subset\R^{d}\right\} $
be a smooth parametric family of density operators on a finite dimensional
Hilbert space $\H$. Let $\D_{\rho_{\theta_{0}}}:\B(\H)\to\B(\H)$
be the commutation operator with respect to a faithful state $\rho_{\theta_{0}}\in\S$
on the set of linear operators $\B(\H)$ on $\H$ defined by
\begin{equation}
\D_{\rho_{\theta_{0}}}(X)\rho_{\theta_{0}}+\rho_{\theta_{0}}\D_{\rho_{\theta_{0}}}(X)=\sqrt{-1}(X\rho_{\theta_{0}}-\rho_{\theta_{0}}X),\label{eq:D_def}
\end{equation}
for $X\in\B(\H)$ \cite{holevo}. The commutation operator can also
be defined by
\begin{equation}
\D_{\rho_{\theta_{0}}}=\frac{1}{\ii}(\bL_{\rho_{\theta_{0}}}-\bR_{\rho_{\theta_{0}}})(\bL_{\rho_{\theta_{0}}}+\bR_{\rho_{\theta_{0}}})^{-1}.
\end{equation}
 When $X$ is a Hermitian operator, $\D_{\rho_{\theta_{0}}}(X)$
is also a Hermitian operator. Through the commutation operator, the
$\beta$ logarithmic derivatives $\{L_{i}^{(\beta)}\}_{i=1}^{d}$
and the corresponding inner product are linked by the following relations:
\begin{align}
L_{i}^{(\beta)} & =(I+\beta\sqrt{-1}\D_{\rho_{\theta_{0}}})^{-1}(L_{i}^{(0)})\qquad(i=1,\dots,d),\label{eq:bLD_SLD}\\
\left\langle A,B\right\rangle^{(\beta)} & =\left\langle A,(I+\beta\sqrt{-1}\D_{\rho_{\theta_{0}}})(B)\right\rangle ^{(0)}\qquad(A,B\in\B(\H))\label{eq:bLD_SLD_inner}
\end{align}
for $\beta\in[0,1]$. Note that $I+\beta\ii\D_{\rho_{\theta_{0}}}$
is invertible for $\rho_{\theta_{0}}>0$ since the operator norm of
$\D_{\rho_{\theta_{0}}}$ is \\
$\max\{\left|\frac{\lambda-\mu}{\lambda+\mu}\right|;\lambda,\mu\text{ are eigenvalues of }\text{\ensuremath{\rho_{\theta_{0}}}}\}<1$.
For details about the commutation operator $\D_{\rho_{\theta_{0}}}$,
see Appendix \ref{sec:Dinv}. By considering a $\D_{\rho_{\theta_{0}}}$
invariant extension $\tilde{\T}\supset\T$ of the SLD tangent space
$\T:={\rm span}_{\R}\,\{L_{i}^{(S)}\}_{i=1}^{d}$, the minimization
problem to define the Holevo bound is simplified as follows.
\begin{thm}
\label{thm:rewrite_holevo}Suppose that a quantum statistical model
$\S=\left\{ \rho_{\theta};\theta\in\Theta\subset\R^{d}\right\} $
on $\H$ has a $\D_{\rho_{\theta_{0}}}$ invariant extension $\tilde{\T}$
of the SLD tangent space of $\S$ at $\theta=\theta_{0}$. Let $\{D_{j}^{(S)}\}_{j=1}^{r}$
be a basis of $\tilde{\T}$. The Holevo bound defined by (\ref{eq:holevo_bound})
is rewritten as
\begin{align}
C_{\theta_{0},G}^{(H)} & =\min_{F}\left\{ \Tr GZ+\Tr\left|\sqrt{G}{\rm Im}Z\sqrt{G}\right|;\,\,Z= F^{\intercal}\Sigma F,\right.\\
 & \qquad F\text{ is an \ensuremath{r\times d} real matrix satisfying } F^{\intercal}{\rm \,Re}(\tau)=I\},\label{eq:min_F}
\end{align}
where $\Sigma$ and $\tau$ are $r\times r$ and $r\times d$ complex
matrix whose $(i,j)th$ entries are given by $\Sigma_{ij}=\Tr\rho_{\theta_{0}}D_{j}^{(S)}D_{i}^{(S)}$
and $\tau_{ij}=\Tr\rho_{\theta_{0}}L_{j}^{(S)}D_{i}^{(S)}$. 
\end{thm}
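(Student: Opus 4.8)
The plan is to start from the already $V$-minimized Holevo expression (\ref{eq:holevo_bound2}) and argue that the remaining minimization over Hermitian operators $B_1,\dots,B_d$ may be restricted, without changing the value, to operators lying in the $\D_{\rho_{\theta_0}}$ invariant space $\tilde{\T}$; expanding such operators in the basis $\{D_j^{(S)}\}$ then yields the matrix problem (\ref{eq:min_F}). First I would recast the ingredients of (\ref{eq:holevo_bound2}) through the inner products $\langle\cdot,\cdot\rangle^{(0)}$ and $\langle\cdot,\cdot\rangle^{(1)}$. Since the $B_i$ are Hermitian and $\partial_i\rho_{\theta_0}=\tfrac12(\rho_{\theta_0}L_i^{(S)}+L_i^{(S)}\rho_{\theta_0})$, the constraint $\Tr\partial_i\rho_{\theta_0}B_j=\delta_{ij}$ is exactly $\langle L_i^{(S)},B_j\rangle^{(0)}=\delta_{ij}$, while $Z_{ij}(B)=\Tr\rho_{\theta_0}B_jB_i=\langle B_i,B_j\rangle^{(1)}=\langle B_i,(I+\ii\D_{\rho_{\theta_0}})B_j\rangle^{(0)}$ by (\ref{eq:bLD_SLD_inner}).

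The central step is the reduction to $\tilde{\T}$. Regard the Hermitian operators as a real space with inner product $\langle\cdot,\cdot\rangle^{(0)}$, and split each $B_j=B_j^{\parallel}+B_j^{\perp}$ with $B_j^{\parallel}\in\tilde{\T}$ and $B_j^{\perp}$ in the $\langle\cdot,\cdot\rangle^{(0)}$-orthogonal complement $\tilde{\T}^{\perp}$. As $L_i^{(S)}\in\T\subset\tilde{\T}$, orthogonality gives $\langle L_i^{(S)},B_j^{\perp}\rangle^{(0)}=0$, so $B_j^{\parallel}$ alone satisfies the constraint. The key fact is that $\D_{\rho_{\theta_0}}$ is skew-symmetric with respect to $\langle\cdot,\cdot\rangle^{(0)}$ (immediate from (\ref{eq:D_def}); cf. Appendix \ref{sec:Dinv}) and preserves Hermiticity, so $\tilde{\T}^{\perp}$ is $\D_{\rho_{\theta_0}}$ invariant as well. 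Then $(I+\ii\D_{\rho_{\theta_0}})B_j^{\perp}$ stays $\langle\cdot,\cdot\rangle^{(0)}$-orthogonal to $\tilde{\T}$ and conversely, so the cross terms vanish and $Z(B)=Z(B^{\parallel})+Z(B^{\perp})$ with $Z(B^{\perp})$ a Hermitian positive semidefinite Gram matrix. Because $Z\mapsto\Tr GZ+\Tr|\sqrt{G}\,\mathrm{Im}Z\sqrt{G}|$ equals $\min\{\Tr GV;\,V\ge Z,\ V\text{ real}\}$ by (\ref{eq:trabs}) and is hence monotone in the positive semidefinite order, passing from $B$ to $B^{\parallel}$ cannot increase the objective; since $\tilde{\T}$-valued tuples form a subset of all feasible tuples, the two minima agree.

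It remains to pass to coordinates. For $B_j\in\tilde{\T}$ write $B_j=\sum_{k=1}^{r}F_{kj}D_k^{(S)}$ with $F=[F_{kj}]$ a real $r\times d$ matrix, real because $\{D_k^{(S)}\}$ is a real basis of the Hermitian space $\tilde{\T}$. A short computation shows $\langle L_i^{(S)},D_k^{(S)}\rangle^{(0)}=\mathrm{Re}\,\tau_{ki}$ and $\langle D_k^{(S)},D_l^{(S)}\rangle^{(1)}=\Sigma_{kl}$, so the constraint becomes $F^{\intercal}\mathrm{Re}(\tau)=I$ and $Z=F^{\intercal}\Sigma F$, giving precisely (\ref{eq:min_F}).

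The main obstacle is the reduction step: one has to confirm that $\tilde{\T}^{\perp}$ is genuinely $\D_{\rho_{\theta_0}}$ invariant and that the off-diagonal contributions $\langle B_i^{\parallel},B_j^{\perp}\rangle^{(1)}$ and $\langle B_i^{\perp},B_j^{\parallel}\rangle^{(1)}$ vanish. Both hinge on combining the $\langle\cdot,\cdot\rangle^{(0)}$-orthogonal decomposition with the operator $I+\ii\D_{\rho_{\theta_0}}$ that links the two inner products, and on the positive semidefiniteness of $Z(B^{\perp})$, which is what licenses the monotonicity argument for the Holevo functional.
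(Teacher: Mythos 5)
Your proposal is correct and follows essentially the same route as the paper: decompose each $B_j$ into its $\tilde{\T}$ and $\tilde{\T}^{\perp}$ components with respect to $\langle\cdot,\cdot\rangle^{(0)}$, use the $\D_{\rho_{\theta_0}}$ invariance of $\tilde{\T}$ (via $\langle X,(I+\ii\D_{\rho_{\theta_0}})Y\rangle^{(0)}=0$) to kill the cross terms in $Z(B)$, conclude $Z(B)\geq Z(B^{\parallel})$, and then pass to the coordinate form $Z=F^{\intercal}\Sigma F$ with $F^{\intercal}\mathrm{Re}(\tau)=I$. Your explicit appeal to the skew-symmetry of $\D_{\rho_{\theta_0}}$ and to the monotonicity of $Z\mapsto\Tr GZ+\Tr|\sqrt{G}\,\mathrm{Im}Z\sqrt{G}|$ via (\ref{eq:trabs}) only makes explicit what the paper leaves terse; it is not a different argument.
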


\begin{proof}
Let $\tilde{\T}^{\perp}$ be the orthogonal complement of $\tilde{\T}$
in the set $\B_{h}(\H)$ of Hermitian operators with respect to the
inner product $\left\langle \cdot,\cdot\right\rangle ^{(0)}$,  and
let $\P:\B_{h}(\H)\to\tilde{\T}$ and $\P^{\perp}:\B_{h}(\H)\to\tilde{\T}^{\perp}$
be the projections associated with the decomposition $\B_{h}(\H)=\tilde{\T}\oplus\tilde{\T}^{\perp}$.
For $X\in\tilde{\T}^{\perp}$ and $Y\in\tilde{\T}$, 
\begin{align}
\Tr X\rho_{\theta_{0}}Y & =\left\langle X,Y\right\rangle ^{(1)}=\left\langle X,(I+\sqrt{-1}\D_{\rho_{\theta_{0}}})(Y)\right\rangle ^{(0)}=0.\label{eq:SLD_RLD_orth}
\end{align}

Let $\{B_{j}\}_{j=1}^{d}$ be observables achieving the minimum in
(\ref{eq:holevo_bound}). $\{\P(B_{j})\}_{j=1}^{d}$ also satisfies
the local unbiasedness condition
\begin{equation}
\Tr\partial_{i}\rho_{\theta_{0}}\P(B_{j})=\left\langle L_{i}^{(S)},\P(B_{j})\right\rangle ^{(0)}=\left\langle L_{i}^{(S)},B_{j}\right\rangle ^{(0)}=\delta_{ij}.
\end{equation}
Further, because of (\ref{eq:SLD_RLD_orth}),
\begin{align}
Z_{ij}(B) & =\Tr B_{i}\rho_{\theta_{0}}B_{j}=\Tr\left\{ \P(B_{i})+\P^{\perp}(B_{i})\right\} \rho_{\theta_{0}}\left\{ \P(B_{j})+\P^{\perp}(B_{j})\right\} \\
 & =\Tr\P(B_{i})\rho_{\theta_{0}}\P(B_{j})+\Tr\P^{\perp}(B_{i})\rho_{\theta_{0}}\P^{\perp}(B_{j})=Z_{ij}(\P(B))+Z_{ij}(\P^{\perp}(B)).
\end{align}
This decomposition implies $Z(B)\geq Z(\P(B))$, thus $\{B_{j}\}_{j=1}^{d}\subset\tilde{\T}$. 

Observables $\{B_{j}\}_{j=1}^{d}\subset\tilde{\T}$ can be expressed
by $B_{j}=\sum_{k}F_{j}^{k}D_{k}^{(S)}$ with an $r\times d$ real
matrix $F$. By using $F$, the local unbiasedness condition in (\ref{eq:holevo_bound})
is written as
\begin{equation}
\left\langle L_{i}^{(S)},B_{j}\right\rangle ^{(0)}=F_{j}^{k}\left\langle L_{i}^{(S)},D_{k}^{(S)}\right\rangle ^{(0)}=F_{j}^{k}\left({\rm Re}\tau\right)_{ik}=\delta_{ij},
\end{equation}
and $Z(B)$ is written as $Z(B)=F^{\intercal}\Sigma F$. 
\end{proof}
Due to this theorem, we can easily see $C_{\theta_{0},G}^{(H)}\{\rho_{\theta}^{\otimes n}\}=\frac{1}{n}C_{\theta_{0},G}^{(H)}\{\rho_{\theta}\}$.
In this paper, we use further rewrite of the Holevo bound as follows. 
\begin{cor}
\label{cor:easy_holevo2}Suppose $D_{i}^{(S)}=L_{i}^{(S)}$ for $1\leq i\leq d$
in Theorem \ref{thm:rewrite_holevo}, and let $R=\left({\rm Re}\Sigma\right)^{-1}\Sigma\left({\rm Re}\Sigma\right)^{-1}=\begin{pmatrix}R_{1} & R_{2}^{*}\\
R_{2} & R_{3}
\end{pmatrix}$ with $d\times d$, $(r-d)\times d$, and $(r-d)\times(d-d)$ block
matrices $R_{1},R_{2},$ and $R_{3}$.  The Holevo bound is rewritten
as
\begin{align}
C_{\theta_{0},G}^{(H)} & =\min_{f}\left\{ \Tr GZ(f)+\Tr\left|\sqrt{G}{\rm Im}Z(f)\sqrt{G}\right|;\right.\\
 & \qquad f\text{ is an \ensuremath{(r-d)\times d} real matrix}\},
\end{align}
where
\begin{equation}
Z(f)=(I,f^{\intercal})R\begin{pmatrix}I\\
f
\end{pmatrix}=R_{1}+R_{2}^{*}f+f^{*}R_{2}+f^{*}R_{3}f.
\end{equation}
\end{cor}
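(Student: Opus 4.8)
The plan is to solve the local unbiasedness constraint of Theorem~\ref{thm:rewrite_holevo} explicitly under the added hypothesis $D_i^{(S)}=L_i^{(S)}$ ($1\le i\le d$), and then to push the resulting parametrization of $F$ through the objective. Since $\Tr GZ+\Tr\left|\sqrt{G}\,{\rm Im}Z\sqrt{G}\right|$ depends on $F$ only through $Z=F^{\intercal}\Sigma F$, it suffices to show that as $F$ ranges over all matrices satisfying $F^{\intercal}{\rm Re}(\tau)=I$, the matrix $Z$ ranges over exactly $\left\{Z(f);\,f\in\R^{(r-d)\times d}\right\}$; the equality of the two minimization problems then follows.

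First I would record what the hypothesis buys us. Comparing $\Sigma_{ij}=\Tr\rho_{\theta_{0}}D_{j}^{(S)}D_{i}^{(S)}$ with $\tau_{ij}=\Tr\rho_{\theta_{0}}L_{j}^{(S)}D_{i}^{(S)}$, and using $D_{j}^{(S)}=L_{j}^{(S)}$ for $j\le d$, one gets $\tau_{ij}=\Sigma_{ij}$ for all $i=1,\dots,r$ and $j=1,\dots,d$. In other words $\tau$ is the first $d$ columns of $\Sigma$, i.e. $\tau=\Sigma E$ with $E:=\begin{pmatrix}I\\0\end{pmatrix}$ an $r\times d$ matrix. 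Taking real parts, ${\rm Re}(\tau)=({\rm Re}\Sigma)E$, so the constraint becomes $F^{\intercal}({\rm Re}\Sigma)E=I$.

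Next I would parametrize the solutions. Because $\{D_{j}^{(S)}\}_{j=1}^{r}$ is a basis, ${\rm Re}\Sigma$ is the (positive definite, hence invertible) SLD Gram matrix of that basis. Setting $\Phi:=({\rm Re}\Sigma)F$ and writing $\Phi=\begin{pmatrix}\Phi_{1}\\\Phi_{2}\end{pmatrix}$ in the $d+(r-d)$ block form, the constraint reads $\Phi^{\intercal}E=\Phi_{1}^{\intercal}=I$, which forces $\Phi_{1}=I$ and leaves $\Phi_{2}=:f$ free. Hence the solution set is exactly
\begin{equation*}
F=({\rm Re}\Sigma)^{-1}\begin{pmatrix}I\\f\end{pmatrix},\qquad f\in\R^{(r-d)\times d},
\end{equation*}
and $f\mapsto F$ is a bijection onto it. Substituting into $Z=F^{\intercal}\Sigma F$ and recalling $R=({\rm Re}\Sigma)^{-1}\Sigma({\rm Re}\Sigma)^{-1}$ gives
\begin{equation*}
Z=(I,f^{\intercal})\,({\rm Re}\Sigma)^{-1}\Sigma({\rm Re}\Sigma)^{-1}\begin{pmatrix}I\\f\end{pmatrix}=(I,f^{\intercal})R\begin{pmatrix}I\\f\end{pmatrix}=Z(f),
\end{equation*}
and expanding the block product reproduces $R_{1}+R_{2}^{*}f+f^{*}R_{2}+f^{*}R_{3}f$ (with $f^{*}=f^{\intercal}$ since $f$ is real), as claimed.

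This argument is purely linear-algebraic and presents no analytic difficulty. The only point requiring care --- and the single place an argument of this type can fail --- is confirming that the displayed family exhausts the solution set rather than covering only part of it; this is exactly what the substitution $\Phi=({\rm Re}\Sigma)F$ together with the block identity $\Phi^{\intercal}E=\Phi_{1}^{\intercal}$ settles, after invoking invertibility of ${\rm Re}\Sigma$. A dimension count, $rd-d^{2}=(r-d)d$, gives an independent check that the parameter $f$ carries the right number of degrees of freedom.
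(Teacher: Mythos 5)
Your proposal is correct and follows essentially the same route as the paper: both rewrite the constraint $F^{\intercal}{\rm Re}(\tau)=I$ using the fact that $\tau$ is the first $d$ columns of $\Sigma$, parametrize the solution set as $F=({\rm Re}\Sigma)^{-1}\begin{pmatrix}I\\f\end{pmatrix}$ with $f$ a free real $(r-d)\times d$ matrix, and substitute into $F^{\intercal}\Sigma F$ to obtain $Z(f)=(I,f^{\intercal})R\begin{pmatrix}I\\f\end{pmatrix}$. Your write-up is in fact slightly more careful than the paper's, since you explicitly verify that the parametrization exhausts the constraint set and that ${\rm Re}\Sigma$ is invertible as the Gram matrix of a basis.
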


\begin{proof}
Let $\Sigma=\begin{pmatrix}\Sigma_{1} & \Sigma_{2}^{*}\\
\Sigma_{2} & \Sigma_{3}
\end{pmatrix}$ be partitioned in the same manner as $R$. For an $r\times d$ real
matrix $F$, the condition $F^{\intercal}{\rm \,Re}(\tau)=I$ implies
\begin{equation}
F{\rm Re}\Sigma=\begin{pmatrix}I\\
f
\end{pmatrix}
\end{equation}
with an $(r-d)\times d$ real matrix $f$ because $\tau=\begin{pmatrix}\Sigma_{1}\\
\Sigma_{2}
\end{pmatrix}$. By using $f$, $F^{\intercal}\Sigma F$ in Theorem \ref{thm:rewrite_holevo}
can be written as
\begin{align}
F^{\intercal}\Sigma F & =(I,f^{\intercal})\left({\rm Re}\Sigma\right)^{-1}\Sigma\left({\rm Re}\Sigma\right)^{-1}\begin{pmatrix}
I\\
f
\end{pmatrix}=(I,f^{\intercal})R\begin{pmatrix}
I\\
f
\end{pmatrix}
\end{align}
\end{proof}
Actually, $R$ in Corollary \ref{cor:easy_holevo2} coincides with
the inverse RLD Fisher information matrix of a supermodel $\tilde{\S}\supset\S$
of $\S$ that have SLDs $\{D_{i}^{(S)}\}_{i=1}^{r}$ due to Lemma
\ref{lem:Dinv} in Appendix \ref{sec:Dinv}. Furthermore, $\beta$
Fisher information matrix can be calculated directly by Schur complement
of ${\rm Re}R+\beta\ii{\rm Im}R$ as follows. 
\begin{lem}
Let $D_{i}^{(\beta)}=(I+\beta\sqrt{-1}\D_{\rho_{\theta_{0}}})^{-1}(D_{i}^{(S)})$
be the $\beta$ logarithmic derivative with respect to the extended
SLD $D_{i}$ ($1\leq i\leq r$) at $\theta_{0}$ given in Corollary
\ref{cor:easy_holevo2}, and let
\begin{equation}
\tilde{J}_{\theta_{0}}^{(\beta)}=\left[\left\langle D_{i}^{(\beta)},D_{j}^{(\beta)}\right\rangle ^{(\beta)}\right]_{1\leq i,j\leq r}
\end{equation}
be the extended $\beta$ Fisher information matrix. Then $R$ given
in Corollary \ref{cor:easy_holevo2} satisfies
\begin{equation}
\tilde{J}_{\theta_{0}}^{(\beta)^{-1}}={\rm Re}R+\beta\ii{\rm Im}R.\label{eq:ex_invbJ}
\end{equation}
Further, the inverse $\beta$ Fisher information matrix $J_{\theta_{0}}^{(\beta)^{-1}}$can
be represented by $R$ as
\begin{equation}
J_{\theta_{0}}^{(\beta)^{-1}}=R_{1}^{(\beta)}-R_{2}^{(\beta)^{*}}R_{3}^{(\beta)^{-1}}R_{2}^{(\beta)},\label{eq:R_bLDFisher}
\end{equation}
where $R_{1}^{(\beta)}={\rm Re}R_{1}+\beta\ii{\rm Im}R_{1}$, $R_{2}^{(\beta)}={\rm Re}R_{2}+\beta\ii{\rm Im}R_{2}$,
$R_{3}^{(\beta)}={\rm Re}R_{3}+\beta\ii{\rm Im}R_{3}.$
\end{lem}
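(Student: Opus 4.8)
The plan is to prove the two identities in turn, deducing (\ref{eq:R_bLDFisher}) from (\ref{eq:ex_invbJ}) by a Schur--complement argument.

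For (\ref{eq:ex_invbJ}), I would first reduce the extended $\beta$ Fisher information to the single SLD inner product $\langle\cdot,\cdot\rangle^{(0)}$. Applying (\ref{eq:bLD_SLD_inner}) with $A=D_i^{(\beta)}$ and $B=D_j^{(\beta)}$ and using $(I+\beta\ii\D_{\rho_{\theta_0}})(D_j^{(\beta)})=D_j^{(S)}$, the matrix entries collapse to $(\tilde{J}_{\theta_0}^{(\beta)})_{ij}=\langle D_i^{(\beta)},D_j^{(S)}\rangle^{(0)}$, so everything lives on the $\D_{\rho_{\theta_0}}$-invariant space $\tilde{\T}$. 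Next I would introduce the matrix of the commutation operator on $\tilde{\T}$: since $\{D_k^{(S)}\}_{k=1}^r$ is a real basis of $\tilde{\T}$ and $\D_{\rho_{\theta_0}}(\tilde{\T})\subset\tilde{\T}$, write $\D_{\rho_{\theta_0}}(D_k^{(S)})=\sum_l M_{lk}D_l^{(S)}$ with a real $r\times r$ matrix $M$; then $(I+\beta\ii\D_{\rho_{\theta_0}})^{-1}$ is represented in the basis $\{D_k^{(S)}\}$ by the complex matrix $(I+\beta\ii M)^{-1}$. From $\Sigma_{ij}=\langle D_i^{(S)},D_j^{(S)}\rangle^{(1)}=\langle D_i^{(S)},(I+\ii\D_{\rho_{\theta_0}})(D_j^{(S)})\rangle^{(0)}$ and the fact that $\langle\cdot,\cdot\rangle^{(0)}$ is real-valued on Hermitian operators, I obtain ${\rm Re}\,\Sigma=[\langle D_i^{(S)},D_j^{(S)}\rangle^{(0)}]_{ij}$ and ${\rm Im}\,\Sigma=[\langle D_i^{(S)},\D_{\rho_{\theta_0}}(D_j^{(S)})\rangle^{(0)}]_{ij}=({\rm Re}\,\Sigma)\,M$, hence $M=({\rm Re}\,\Sigma)^{-1}{\rm Im}\,\Sigma$.

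The structural input I rely on here is that $\ii\D_{\rho_{\theta_0}}$ is self-adjoint for $\langle\cdot,\cdot\rangle^{(0)}$ (equivalently $\D_{\rho_{\theta_0}}$ is skew-adjoint), which follows from the defining relation (\ref{eq:D_def}) together with the conjugate symmetry of $\langle\cdot,\cdot\rangle^{(0)}$ (cf.\ Appendix \ref{sec:Dinv}). This self-adjointness lets me transfer $(I+\beta\ii\D_{\rho_{\theta_0}})^{-1}$ onto the second slot, giving $\tilde{J}_{\theta_0}^{(\beta)}=({\rm Re}\,\Sigma)(I+\beta\ii M)^{-1}$. Inverting and substituting $M=({\rm Re}\,\Sigma)^{-1}{\rm Im}\,\Sigma$ and $R=({\rm Re}\,\Sigma)^{-1}\Sigma({\rm Re}\,\Sigma)^{-1}$ yields $\tilde{J}_{\theta_0}^{(\beta)^{-1}}=({\rm Re}\,\Sigma)^{-1}+\beta\ii\,({\rm Re}\,\Sigma)^{-1}{\rm Im}\,\Sigma\,({\rm Re}\,\Sigma)^{-1}={\rm Re}\,R+\beta\ii\,{\rm Im}\,R$, which is (\ref{eq:ex_invbJ}). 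For (\ref{eq:R_bLDFisher}), the key observation is that $D_i^{(S)}=L_i^{(S)}$ for $1\le i\le d$ (the hypothesis of Corollary \ref{cor:easy_holevo2}) forces $D_i^{(\beta)}=L_i^{(\beta)}$ for these indices, so the non-extended matrix $J_{\theta_0}^{(\beta)}$ is precisely the leading $d\times d$ principal block of $\tilde{J}_{\theta_0}^{(\beta)}=(R^{(\beta)})^{-1}$, where $R^{(\beta)}:={\rm Re}\,R+\beta\ii\,{\rm Im}\,R$ is the Hermitian matrix with blocks $R_1^{(\beta)},R_2^{(\beta)},R_3^{(\beta)}$. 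Applying the block-inverse (Schur complement) formula of Appendix \ref{sec:schur} to $R^{(\beta)}$ identifies this leading block as $(R_1^{(\beta)}-R_2^{(\beta)^{*}}R_3^{(\beta)^{-1}}R_2^{(\beta)})^{-1}$, and taking inverses gives (\ref{eq:R_bLDFisher}).

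The step I expect to be the main obstacle is obtaining $\tilde{J}_{\theta_0}^{(\beta)}=({\rm Re}\,\Sigma)(I+\beta\ii M)^{-1}$: one must track the interplay between the conjugate-linearity of $\langle\cdot,\cdot\rangle^{(0)}$ in its first argument and the skew-adjointness of $\D_{\rho_{\theta_0}}$, so that the conjugations cancel and the representing matrix is exactly $(I+\beta\ii M)^{-1}$ rather than its conjugate. Once the self-adjointness of $\ii\D_{\rho_{\theta_0}}$ and the identifications ${\rm Re}\,\Sigma=[\langle D_i^{(S)},D_j^{(S)}\rangle^{(0)}]$ and ${\rm Im}\,\Sigma=({\rm Re}\,\Sigma)M$ are verified, the remainder is routine matrix algebra and a standard Schur complement.
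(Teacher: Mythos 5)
Your proof is correct, and the second identity \eqref{eq:R_bLDFisher} is handled exactly as in the paper (the $d\times d$ leading block of $\tilde{J}_{\theta_{0}}^{(\beta)}=(R^{(\beta)})^{-1}$ is the inverse of the Schur complement $R^{(\beta)}/R_{3}^{(\beta)}$, using $D_{i}^{(\beta)}=L_{i}^{(\beta)}$ for $i\leq d$). For the first identity \eqref{eq:ex_invbJ}, however, you take a genuinely different route: the paper simply defers to Lemma \ref{lem:Dinv}, whose proof runs through the abstract Lemma \ref{lem:genaral_pos} --- one forms the Gram matrix of $\{Se_{i}\}\cup\{S^{-1}e_{i}\}$ for the positive operator $S=I+\beta\ii\D_{\rho_{\theta_{0}}}$ and observes that $S(\K_{0})=\K_{0}$ forces its rank to drop, so the Schur complement vanishes and $K^{(3)^{-1}}=K^{(2)^{-1}}K^{(1)}K^{(2)^{-1}}$. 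You instead compute directly with the real representing matrix $M$ of $\D_{\rho_{\theta_{0}}}$ on the invariant subspace, obtaining $\tilde{J}_{\theta_{0}}^{(\beta)}=({\rm Re}\,\Sigma)(I+\beta\ii M)^{-1}$ and inverting; the key facts you use (self-adjointness of $\ii\D_{\rho_{\theta_{0}}}$ with respect to $\left\langle \cdot,\cdot\right\rangle ^{(0)}$, reality of $\left\langle D_{i}^{(S)},D_{j}^{(S)}\right\rangle ^{(0)}$, and ${\rm Im}\,\Sigma=({\rm Re}\,\Sigma)M$) all check out, and the invertibility of $I+\beta\ii M$ follows from that of $I+\beta\ii\D_{\rho_{\theta_{0}}}$ restricted to the invariant complex span. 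Your computation is more elementary and transparent for the implication actually needed here; the paper's Gram-matrix argument buys more, namely the full equivalence in Lemma \ref{lem:Dinv} between $\D_{\rho_{\theta_{0}}}$-invariance and the formula for $J_{\theta_{0}}^{(\beta)^{-1}}$, which is reused elsewhere. No gaps.
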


\begin{proof}
The proof of (\ref{eq:ex_invbJ}) is given in Lemma \ref{lem:Dinv}.
The proof of (\ref{eq:R_bLDFisher}) is immediate, because $J_{\theta_{0}}^{(\beta)}$
is the $(1,1)$ block of

\begin{equation}
\tilde{J}_{\theta_{0}}^{(\beta)}=\begin{pmatrix}R_{1}^{(\beta)} & R_{2}^{(\beta)^{*}}\\
R_{2}^{(\beta)} & R_{3}^{(\beta)}
\end{pmatrix}^{-1},
\end{equation}
and it is the same as the inverse of $\tilde{J}_{\theta_{0}}^{(\beta)^{-1}}/R_{3}^{(\beta)}=R_{1}^{(\beta)}-R_{2}^{(\beta)^{*}}R_{3}^{(\beta)^{-1}}R_{2}^{(\beta)}$,
where $\tilde{J}_{\theta_{0}}^{(\beta)^{-1}}/R_{3}^{(\beta)}$ is
the Schur complement given in Appendix \ref{sec:schur}. 
\end{proof}
From this lemma, a relation between the Holevo bound and $C_{\theta_{0},G}^{(\beta)}$
can be obtained directly.
\begin{lem}
For any $\beta\in[0,1]$,
\begin{equation}
C_{\theta_{0},G}^{(H)}\geq C_{\theta_{0},G}^{(\beta)}.
\end{equation}
\end{lem}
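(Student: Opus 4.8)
The plan is to bound the Holevo bound below by $C_{\theta_{0},G}^{(\beta)}$ through the representation of Corollary \ref{cor:easy_holevo2}. Since a finite dimensional $\D_{\rho_{\theta_0}}$ invariant extension of the SLD tangent space always exists in finite dimensions (for instance $\B_{h}(\H)$ itself, which $\D_{\rho_{\theta_0}}$ maps into itself), I may choose a basis with $D_{i}^{(S)}=L_{i}^{(S)}$ for $1\le i\le d$ and write, with the shorthand $\Phi(W):=\Tr GW+\Tr\left|\sqrt{G}{\rm Im}W\sqrt{G}\right|$,
\begin{equation}
C_{\theta_{0},G}^{(H)}=\min_{f}\Phi(Z(f)),\qquad Z(f)=(I,f^{\intercal})R\begin{pmatrix}I\\f\end{pmatrix},
\end{equation}
the minimum running over real $(r-d)\times d$ matrices $f$. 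It then suffices to show $\Phi(Z(f))\ge\Phi(J_{\theta_{0}}^{(\beta)^{-1}})=C_{\theta_{0},G}^{(\beta)}$ for every real $f$ and take the minimum.

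The argument rests on two elementary properties of $\Phi$. First, $\Phi$ is monotone for the matrix order: by the variational identity (\ref{eq:trabs}), $\Phi(W)=\min\{\Tr GV;\,V\ge W,\,V\text{ real}\}$, so $A\ge B\ge0$ forces $\Phi(A)\ge\Phi(B)$ because every real $V$ dominating $A$ also dominates $B$. Second, writing $W^{(\beta)}:={\rm Re}\,W+\beta\ii\,{\rm Im}\,W$, I claim $\Phi(W^{(\beta)})\le\Phi(W)$ for $\beta\in[0,1]$. The trace term is unchanged, since ${\rm Im}\,W$ being real antisymmetric and $G$ symmetric give $\Tr G\,{\rm Im}\,W=0$, hence $\Tr GW=\Tr G\,{\rm Re}\,W=\Tr GW^{(\beta)}$; and the correction term only shrinks, because ${\rm Im}\,W^{(\beta)}=\beta\,{\rm Im}\,W$ yields $\Tr\left|\sqrt{G}\,{\rm Im}\,W^{(\beta)}\sqrt{G}\right|=\beta\,\Tr\left|\sqrt{G}\,{\rm Im}\,W\sqrt{G}\right|\le\Tr\left|\sqrt{G}\,{\rm Im}\,W\sqrt{G}\right|$.

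Next I would connect $Z(f)^{(\beta)}$ to the inverse $\beta$ Fisher information matrix. For real $f$ one has ${\rm Re}\,Z(f)=(I,f^{\intercal})({\rm Re}\,R)\begin{pmatrix}I\\f\end{pmatrix}$ and ${\rm Im}\,Z(f)=(I,f^{\intercal})({\rm Im}\,R)\begin{pmatrix}I\\f\end{pmatrix}$, so
\begin{equation}
Z(f)^{(\beta)}=(I,f^{\intercal})R^{(\beta)}\begin{pmatrix}I\\f\end{pmatrix},\qquad R^{(\beta)}={\rm Re}\,R+\beta\ii\,{\rm Im}\,R.
\end{equation}
By the preceding Lemma, $J_{\theta_{0}}^{(\beta)^{-1}}=R_{1}^{(\beta)}-R_{2}^{(\beta)^{*}}R_{3}^{(\beta)^{-1}}R_{2}^{(\beta)}$ is precisely the Schur complement of $R_{3}^{(\beta)}$ in $R^{(\beta)}=\tilde{J}_{\theta_{0}}^{(\beta)^{-1}}>0$, so the Schur complement inequality of Appendix \ref{sec:schur} gives $(I,g^{*})R^{(\beta)}\begin{pmatrix}I\\g\end{pmatrix}\ge J_{\theta_{0}}^{(\beta)^{-1}}$ for every complex $g$; specializing to the real $g=f$ yields $Z(f)^{(\beta)}\ge J_{\theta_{0}}^{(\beta)^{-1}}$.

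Combining the three ingredients, for every real $f$,
\begin{equation}
\Phi(Z(f))\ge\Phi(Z(f)^{(\beta)})\ge\Phi(J_{\theta_{0}}^{(\beta)^{-1}})=C_{\theta_{0},G}^{(\beta)},
\end{equation}
where the first inequality is the $\beta$-contraction property and the second is monotonicity applied to $Z(f)^{(\beta)}\ge J_{\theta_{0}}^{(\beta)^{-1}}$; minimizing the left-hand side over $f$ gives $C_{\theta_{0},G}^{(H)}\ge C_{\theta_{0},G}^{(\beta)}$. I expect the main obstacle to be conceptual rather than computational: recognizing that the single functional $\Phi$ carries all the weight, being simultaneously order-monotone (a reinterpretation of (\ref{eq:trabs})) and contractive under the passage $W\mapsto W^{(\beta)}$ that scales only the imaginary part by $\beta\le1$. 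Once these two properties of $\Phi$ are isolated, the identity $Z(f)^{(\beta)}=(I,f^{\intercal})R^{(\beta)}\begin{pmatrix}I\\f\end{pmatrix}$ and the Schur complement reduce everything to bookkeeping.
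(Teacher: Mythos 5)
Your proof is correct and follows essentially the same route as the paper's: both pass through Corollary \ref{cor:easy_holevo2}, replace $R$ by $R^{(\beta)}={\rm Re}\,R+\beta\ii\,{\rm Im}\,R$ using the fact that this leaves the trace term unchanged and scales the absolute-value term by $\beta\le1$, and then invoke the Schur-complement completion of squares to bound $Z^{(\beta)}(f)$ below by $J_{\theta_{0}}^{(\beta)^{-1}}$. The only cosmetic difference is that you make the order-monotonicity of the functional $\Phi$ explicit and argue pointwise in $f$, whereas the paper relaxes the real constraint to complex $f$ and evaluates that minimum exactly; these are interchangeable.
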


\begin{proof}
Let $Z^{(\beta)}(f):=(I,f^{\intercal})({\rm Re}R+\beta\ii{\rm Im}R)\begin{pmatrix}I\\
f
\end{pmatrix}.$ Then we see
\begin{align}
C_{\theta_{0},G}^{(H)} & =\min_{f}\left\{ \Tr GZ(f)+\Tr\left|\sqrt{G}{\rm Im}Z(f)\sqrt{G}\right|;\right.\\
 & \qquad f\text{ is an \ensuremath{(r-d)\times d} real matrix}\},\\
 & \geq\min_{f}\left\{ \Tr GZ^{(\beta)}(f)+\Tr\left|\sqrt{G}{\rm Im}Z^{(\beta)}(f)\sqrt{G}\right|;\right.\\
 & \qquad f\text{ is an \ensuremath{(r-d)\times d} real matrix}\}\\
 & \geq\min_{f}\left\{ \Tr GZ^{(\beta)}(f)+\Tr\left|\sqrt{G}{\rm Im}Z^{(\beta)}(f)\sqrt{G}\right|;\right.\\
 & \qquad f\text{ is an \ensuremath{(r-d)\times d} complex matrix}\}=C_{\theta_{0},G}^{(\beta)}.
\end{align}
The last equality is obtained from
\begin{align}
Z^{(\beta)}(f) & =R_{1}^{(\beta)}-R_{2}^{(\beta)^{*}}R_{3}^{(\beta)^{-1}}R_{2}^{(\beta)}+(f^{*}+R_{2}^{(\beta)^{*}}R_{3}^{(\beta)^{-1}})R_{3}^{(\beta)}(f+R_{3}^{(\beta)^{-1}}R_{2}^{(\beta)})\\
 & \geq R_{1}^{(\beta)}-R_{2}^{(\beta)^{*}}R_{3}^{(\beta)^{-1}}R_{2}^{(\beta)}=J_{\theta_{0}}^{(\beta)^{-1}}
\end{align}
and the minimum is achieved when $f=-R_{3}^{(\beta)^{-1}}R_{2}^{(\beta)}$.
\end{proof}

\textcolor{black}{For a numerical computation of the Holevo bound, it
was proposed to apply a linear semi-definite program\cite{sdp}. The
minimization problem given in Corollary \ref{cor:easy_holevo2} can
be rewritten to a linear semi-definite program:
\begin{equation}
{\rm minimize}_{f,V}\,\Tr GV
\end{equation}
\begin{equation}
\text{subject to }\begin{pmatrix}V & \begin{pmatrix}I_{d} & f^{\intercal}\end{pmatrix}\sqrt{R}\\
\sqrt{R}\begin{pmatrix}I_{d}\\
f
\end{pmatrix} & I_{r}
\end{pmatrix}\geq0,\label{eq:sdp2}
\end{equation}
where $V$ is a $d\times d$ real matrix, $I_{d}$ and $I_{r}$ are
the identity matrices of size $d$ and $r$, since we can see that
the inequality $V\geq Z(f)=\begin{pmatrix}I_{d} & f^{\intercal}\end{pmatrix}R\begin{pmatrix}I_{d}\\
f
\end{pmatrix}$ is equivalent to (\ref{eq:sdp2}) by considering the Schur complement
of (\ref{eq:sdp2}).}

\textcolor{black}{
The relationship between the bounds introduced in
this paper is
\begin{equation}
2C_{\theta_{0},G}^{(S)}\geq C_{\theta_{0},G}^{(H)}\geq\max_{0\leq\beta\leq1}C_{\theta_{0},G}^{(\beta)}\geq\max\{C_{\theta_{0},G}^{(S)},C_{\theta_{0},G}^{(R)}\}.
\end{equation}
In the first inequality, $2C_{\theta_{0},G}^{(S)}$ is known as an
upper bound\cite{upper1,upper2} of the Holevo bound. 
This inequality can be shown as follows. 
In Corollary \ref{cor:easy_holevo2},  it can be seen that
\begin{equation}
\Tr G\,{\rm Re}Z(f^{(S)})\leq C_{\theta_{0},G}^{(H)}\leq\Tr G\,{\rm Re}Z(f^{(S)})+\Tr\left|\sqrt{G}{\rm Im}Z(f^{(S)})\sqrt{G}\right|,\label{eq:fac2proof}
\end{equation}
where $f^{(S)}:=-({\rm Re}R_{3})^{-1}({\rm Re}R_{2})$. Because
\begin{equation}
{\rm Re}Z(f^{(S)})=J_{\theta_{0}}^{(S)^{-1}}\geq\ii{\rm Im}Z(f^{(S)}),
\end{equation}
we obtain the inequality
\begin{equation}
2C_{\theta_{0},G}^{(S)}\geq C_{\theta_{0},G}^{(H)}.\label{eq:fac2}
\end{equation}
}

\textcolor{black}{For any sufficiently smooth model $\left\{ \rho_{\theta};\theta\in\Theta\subset\R^{d}\right\} $,
it is known that a sequence of i.i.d. extension models $\left\{ \rho_{\theta_{0}+h/\sqrt{n}}^{\otimes n};h\in\R^{d}\right\} $
with a local parameter $h\in\R^{d}$ has a sequence of estimators
that achieves the Holevo bound $C_{\theta_{0},G}^{(H)}$ asymptotically
by using the theory of the quantum local asymptotic normality \cite{YFG,qlan2,guta}.
On the other hand, $C_{\theta_{0},G}^{(S)},C_{\theta_{0},G}^{(R)}$,
and $\max_{0\leq\beta\leq1}C_{\theta_{0},G}^{(\beta)}$ can not be
always achieved by considering i.i.d. extension. Therefore, these
bounds are informative only when they are consistent with the Holevo
bound. It can be obviously seen from Theorem \ref{thm:rewrite_holevo}
that $C_{\theta_{0},G}^{(H)}=C_{\theta_{0},G}^{(R)}$ if SLDs are
$\D_{\theta_{0}}$ invariant. It can be also seen from (\ref{eq:fac2proof})
that $C_{\theta_{0},G}^{(H)}=C_{\theta_{0},G}^{(S)}$ if and only
if ${\rm Im}Z(f^{(S)})=0$. Since the maximum logarithmic derivative
bound $\max_{0\leq\beta\leq1}C_{\theta_{0},G}^{(\beta)}$ is larger
than the SLD bound and the RLD bound, if the Holevo bound $C_{\theta_{0},G}^{(H)}$
is equal to the SLD bound or RLD bound, $C_{\theta_{0},G}^{(H)}=\max_{0\leq\beta\leq1}C_{\theta_{0},G}^{(\beta)}$
also holds. In Section \ref{sec:codimension}, we provide another
case of satisfying $C_{\theta_{0},G}^{(H)}=\max_{0\leq\beta\leq1}C_{\theta_{0},G}^{(\beta)}$
that is different from SLD or RLD bounds and has an explicit solution.
Further, we give examples of models that can achieve $\max_{0\leq\beta\leq1}C_{\theta_{0},G}^{(\beta)}$
(see Example \ref{exa:gauss} and \ref{exa:pure}).}

\textcolor{black}{When $\rho_{\theta_{0}}$ is not strictly positive,
the $\beta$ logarithmic derivatives $\{L_{i}^{(\beta)}\}_{i=1}^{d}$
that satisfy (\ref{eq:Lbeta_def}) for $-1<\beta<1$ can be defined
on the quotient space $\B(\H)/\sim_{\rho_{\theta_{0}}}$ with respect
to an equivalence relation defined by 
\begin{equation}
A\sim_{\rho_{\theta_{0}}}B\Leftrightarrow A-B\in{\rm Ker}\bL_{\rho_{\theta_{0}}}\cap{\rm Ker}\bR_{\rho_{\theta_{0}}}.
\end{equation}
The inner product $\left\langle \cdot,\cdot\right\rangle ^{(\beta)}$
on $\B(\H)/\sim_{\rho_{\theta_{0}}}$ and the $\beta$ Fisher information
matrix $J_{\theta_{0}}^{(\beta)}$ can be also defined by 
(\ref{eq:beta_inner})
and 
(\ref{eq:beta_fisher}). 
The commutation operator 
$\D_{\rho_{\theta_{0}}}$
is defined by (\ref{eq:D_def}) as a super operator on $\B(\H)/\sim_{\rho_{\theta_{0}}}$.
For $\beta=1$, RLDs $\{L_{i}^{(R)}\}_{i=1}^{d}$ cannot be defined,
however $\left\langle \cdot,\cdot\right\rangle ^{(1)}$ can be defined
as a pre-inner product on $\B(\H)/\sim_{\rho_{\theta_{0}}}$ and (\ref{eq:bLD_SLD_inner})
is valid. 
Theorem \ref{thm:rewrite_holevo} and Corollary \ref{cor:easy_holevo2}
also holds in a similar way by dealing with $\B(\H)/\sim_{\rho_{\theta_{0}}}$
instead of $\B(\H)$. }

\section{\label{sec:codimension}$\protect\D_{\rho_{\theta_{0}}}$ invariant
extension with one dimension}

In Corollary \ref{cor:easy_holevo2}, if $\{D_{j}\}_{j=d+1}^{r}$
are orthogonal to $\T={\rm span}_{\R}\{D_{i}\}_{i=1}^{d}$ with respect
to the inner product $\left\langle \cdot,\cdot\right\rangle ^{(0)}$,
${\rm Re}R_{2}=0$. Further, if $r=d+1$ and $\left\langle D_{r},D_{r}\right\rangle ^{(0)}=1$,
$R$ can take form of
\begin{equation}
R=\begin{pmatrix}A & \sqrt{-1}\ket b\\
-\sqrt{-1}\bra b & 1
\end{pmatrix}\label{eq:D-1model}
\end{equation}
with a real vector $\ket b\in\R^{d}$. In this case,
\begin{equation}
J_{\theta_{0}}^{(\beta)^{-1}}={\rm Re}A+\beta\ii{\rm Im}A-\beta^{2}\ket b\bra b
\end{equation}
due to (\ref{eq:R_bLDFisher}), and
\begin{equation}
C_{\theta_{0},G}^{(\beta)}=\Tr G{\rm Re}A+\beta\Tr\left|\sqrt{G}{\rm Im}A\sqrt{G}\right|-\beta^{2}\bra bG\ket b.
\label{eq:bLD_Ab}
\end{equation}
Therefore $A$ and $\ket b\bra b$ can be expressed by $J_{\theta_{0}}^{(R)^{-1}}$
and $J_{\theta_{0}}^{(S)^{-1}}$ as
\begin{equation}
A=J_{\theta_{0}}^{(S)^{-1}}+\sqrt{-1}{\rm Im}(J_{\theta_{0}}^{(R)^{-1}})\label{eq:A_relation}
\end{equation}
\begin{equation}
\ket b\bra b=J_{\theta_{0}}^{(S)^{-1}}-{\rm Re}(J_{\theta_{0}}^{(R)^{-1}}).\label{eq:b_relation}
\end{equation}
Let us calculate the maximum logarithmic derivative bound
\begin{equation}
\max_{0\leq\beta\leq1}C_{\theta_{0},G}^{(\beta)}=\max_{0\leq\beta\leq1}\Tr G{\rm Re}A+\beta\Tr\left|\sqrt{G}{\rm Im}A\sqrt{G}\right|-\beta^{2}\bra bG\ket b.\label{eq:max_program}
\end{equation}
If $\ket b\not=0$, the quadratic function
\begin{equation}
g_{1}:\beta\mapsto\Tr G{\rm Re}A+\beta\Tr\left|\sqrt{G}{\rm Im}A\sqrt{G}\right|-\beta^{2}\bra bG\ket b
\end{equation}
is maximized at
\begin{equation}
\beta=\frac{\Tr\left|\sqrt{G}{\rm Im}A\sqrt{G}\right|}{2\bra bG\ket b}>0.
\end{equation}
If $\frac{\Tr\left|\sqrt{G}{\rm Im}A\sqrt{G}\right|}{2\bra bG\ket b}\geq1$,
$C_{\theta_{0},G}^{(\beta)}$ is maximized at $\beta=1$, thus
\begin{equation}
\max_{0\leq\beta\leq1}C_{\theta_{0},G}^{(\beta)}=g_{1}(1)=\Tr G{\rm Re}A+\Tr\left|\sqrt{G}{\rm Im}A\sqrt{G}\right|-\bra bG\ket b.\label{eq:max1}
\end{equation}
If $\frac{\Tr\left|\sqrt{G}{\rm Im}A\sqrt{G}\right|}{2\bra bG\ket b}<1$,
$C_{\theta_{0},G}^{(\beta)}$ is maximized at $\beta=\frac{\Tr\left|\sqrt{G}{\rm Im}A\sqrt{G}\right|}{2\bra bG\ket b}$,
thus
\begin{equation}
\max_{0\leq\beta\leq1}C_{\theta_{0},G}^{(\beta)}=g_{1}(\frac{\Tr\left|\sqrt{G}{\rm Im}A\sqrt{G}\right|}{2\bra bG\ket b})=\Tr G{\rm Re}A+\frac{\left\{ \Tr\left|\sqrt{G}{\rm Im}A\sqrt{G}\right|\right\} ^{2}}{4\bra bG\ket b}.\label{eq:max2}
\end{equation}
When $\ket b=0$, $g_{1}$ is a linear function and $C_{\theta_{0},G}^{(\beta)}$
is maximized at $\beta=1$, so
\begin{equation}
\max_{0\leq\beta\leq1}C_{\theta_{0},G}^{(\beta)}=g_{1}(1)=\Tr G{\rm Re}A+\Tr\left|\sqrt{G}{\rm Im}A\sqrt{G}\right|.
\end{equation}

Collecting these result, we have the following theorem.
\begin{thm}
\label{thm:mld_1}When the model has $d+1$ dimensional $\D_{\rho_{\theta_{0}}}$
invariant extended SLD tangent space, the maximum logarithmic derivative
bound is
\begin{equation}
\max_{0\leq\beta\leq1}C_{\theta_{0},G}^{(\beta)}=\begin{cases}
C_{\theta_{0},G}^{(1)} & \text{if }\hat{\beta}\geq1,\\
C_{\theta_{0},G}^{(\hat{\beta})} & \text{\text{otherwise}},
\end{cases}
\end{equation}
where
\begin{equation}
\hat{\beta}=\begin{cases}
\frac{\Tr\left|\sqrt{G}{\rm Im}J_{\theta_{0}}^{(R)^{-1}}\sqrt{G}\right|}{2\Tr G\left\{ J_{\theta_{0}}^{(S)^{-1}}-{\rm Re}(J_{\theta_{0}}^{(R)^{-1}})\right\} } & \text{if }J_{\theta_{0}}^{(S)^{-1}}\not={\rm Re}(J_{\theta_{0}}^{(R)^{-1}}),\\
\infty & \text{\text{otherwise}}.
\end{cases}
\end{equation}
\end{thm}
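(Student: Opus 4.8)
The plan is to reduce the maximization in (\ref{eq:max_prob}) to an elementary optimization of a single real quadratic on $[0,1]$. The key structural input is that a $d+1$ dimensional $\D_{\rho_{\theta_{0}}}$ invariant extended SLD tangent space forces the matrix $R$ from Corollary \ref{cor:easy_holevo2} into the special form (\ref{eq:D-1model}). First I would fix the basis $\{D_{j}\}_{j=1}^{r}$ so that $D_{i}=L_{i}^{(S)}$ for $1\le i\le d$ and the single extra vector $D_{r}$ is orthogonal to $\T$ with $\left\langle D_{r},D_{r}\right\rangle^{(0)}=1$; then ${\rm Re}R_{2}=0$ and $R_{3}=1$, so $R_{2}=-\ii\bra b$ and $R_{3}^{(\beta)}=1$ for a real vector $\ket b\in\R^{d}$. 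Substituting into (\ref{eq:R_bLDFisher}) collapses the Schur complement to $J_{\theta_{0}}^{(\beta)^{-1}}={\rm Re}A+\beta\ii{\rm Im}A-\beta^{2}\ket b\bra b$, exactly as in (\ref{eq:bLD_Ab}).

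Next I would read off the objective as a function of $\beta$. Since ${\rm Re}A$ and $\ket b\bra b$ are real while $\beta\ii{\rm Im}A$ is the only contribution to the imaginary part, one has ${\rm Im}(J_{\theta_{0}}^{(\beta)^{-1}})=\beta{\rm Im}A$, and because $\beta\ge0$ the absolute-value term is homogeneous of degree one in $\beta$. Hence $C_{\theta_{0},G}^{(\beta)}=g_{1}(\beta)$ with $g_{1}(\beta)=\Tr G{\rm Re}A+\beta\Tr\left|\sqrt{G}{\rm Im}A\sqrt{G}\right|-\beta^{2}\bra bG\ket b$, a quadratic whose leading coefficient $-\bra bG\ket b$ is nonpositive since $G>0$. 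To convert $A$ and $\ket b\bra b$ into SLD and RLD data I would evaluate at the endpoints: $\beta=0$ gives $J_{\theta_{0}}^{(S)^{-1}}={\rm Re}A$, and $\beta=1$ gives $J_{\theta_{0}}^{(R)^{-1}}=A-\ket b\bra b$, whose real and imaginary parts yield (\ref{eq:b_relation}) and (\ref{eq:A_relation}). In particular $\bra bG\ket b=\Tr G\{J_{\theta_{0}}^{(S)^{-1}}-{\rm Re}(J_{\theta_{0}}^{(R)^{-1}})\}$ and $\Tr\left|\sqrt{G}{\rm Im}A\sqrt{G}\right|=\Tr\left|\sqrt{G}{\rm Im}(J_{\theta_{0}}^{(R)^{-1}})\sqrt{G}\right|$.

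Finally I would maximize the concave $g_{1}$ over $[0,1]$ by the usual dichotomy. When $\ket b\ne0$ the function is strictly concave with unconstrained vertex at $\hat\beta=\Tr\left|\sqrt{G}{\rm Im}A\sqrt{G}\right|/(2\bra bG\ket b)>0$; since the vertex lies to the right of $0$, the constrained maximizer on $[0,1]$ is $\min(\hat\beta,1)$, giving $C_{\theta_{0},G}^{(1)}$ when $\hat\beta\ge1$ and $C_{\theta_{0},G}^{(\hat\beta)}$ otherwise. When $\ket b=0$ the quadratic term vanishes, $g_{1}$ is affine and nondecreasing, so the maximum is at $\beta=1$; this is exactly the case recorded by $\hat\beta=\infty$ (equivalently $J_{\theta_{0}}^{(S)^{-1}}={\rm Re}(J_{\theta_{0}}^{(R)^{-1}})$), so it merges into the first branch. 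Substituting the endpoint identities above turns $\hat\beta$ into the stated ratio, completing the case analysis.

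The only genuinely nontrivial step is the structural reduction in the first paragraph, namely verifying that codimension one together with $\D_{\rho_{\theta_{0}}}$ invariance pins $R$ down to (\ref{eq:D-1model}) and that (\ref{eq:R_bLDFisher}) then degenerates because $R_{3}^{(\beta)}\equiv1$. Once the objective is exhibited as the explicit quadratic $g_{1}$, the remainder is one-variable calculus on a closed interval, and the endpoint identities (\ref{eq:A_relation})--(\ref{eq:b_relation}) are precisely what let me express $\hat\beta$ intrinsically through $J_{\theta_{0}}^{(S)}$ and $J_{\theta_{0}}^{(R)}$ rather than through the auxiliary $A$ and $\ket b$.
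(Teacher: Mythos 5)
Your proposal is correct and follows essentially the same route as the paper: reduce $R$ to the form (\ref{eq:D-1model}), observe via (\ref{eq:R_bLDFisher}) that $C_{\theta_{0},G}^{(\beta)}$ is the explicit concave quadratic $g_{1}(\beta)$, recover $A$ and $\ket b\bra b$ from the endpoints $\beta=0,1$ as in (\ref{eq:A_relation})--(\ref{eq:b_relation}), and maximize over $[0,1]$ by the vertex-versus-endpoint dichotomy, with $\ket b=0$ folded into the $\hat{\beta}=\infty$ branch. No gaps.
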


In general, even when $R$ can take form of (\ref{eq:D-1model}),
the equality of
\begin{equation}
C_{\theta_{0},G}^{(H)}\geq\max_{0\leq\beta\leq1}C_{\theta_{0},G}^{(\beta)}
\end{equation}
is not always achieved. However, when $d=2$, these two bounds are
consistent.
\begin{thm}
\label{thm:holevo_mld}When $d=2$ and the model has three dimensional
$\D_{\rho_{\theta_{0}}}$ invariant extended SLD tangent space, the
Holevo bound $C_{\theta_{0},G}^{(H)}$ is the same as $\max_{0\leq\beta\leq1}C_{\theta_{0},G}^{(\beta)}$
which is given explicitly in Theorem \ref{thm:mld_1}. 
\end{thm}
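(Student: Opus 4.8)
The plan is to recognize the explicit maximization of Theorem \ref{thm:mld_1} as the Lagrangian dual of the Holevo minimization in Corollary \ref{cor:easy_holevo2}, and to close the argument by strong duality. Since the lemma preceding this theorem already gives $C_{\theta_0,G}^{(H)}\geq\max_{0\leq\beta\leq1}C_{\theta_0,G}^{(\beta)}$, only the reverse inequality remains, and that is precisely what the absence of a duality gap supplies. Throughout I would specialize to $d=2$, $r=3$, so that by (\ref{eq:D-1model}) one has $R_1=A$, $R_2=-\sqrt{-1}\bra b$, $R_3=1$, and $f=\bra c$ is a real row vector in $\R^2$, with $f^{*}=\ket c$.

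First I would rewrite the Holevo bound, via (\ref{eq:trabs}), as the semidefinite program
\[
C_{\theta_0,G}^{(H)}=\min_{f,V}\left\{\Tr GV;\,V\geq Z(f),\,V\text{ real symmetric}\right\},
\]
which is convex because $Z(f)$ is matrix convex in $f$ (here $R_3=1\geq0$), and which is strictly feasible since $V=Z(f)+I$ works for any $f$. Hence Slater's condition holds.

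Next I would form the Lagrangian $L(f,V,W)=\Tr GV-\Tr\!\left[W(V-Z(f))\right]$ with a Hermitian multiplier $W\geq0$. Minimizing over real symmetric $V$ forces ${\rm Re}\,W=G$ (otherwise the infimum is $-\infty$), and a Hermitian $W$ with ${\rm Re}\,W=G$ is positive semidefinite exactly when $W=G+\gamma\sqrt{-1}\,\Omega$ with $\Omega=\left(\begin{smallmatrix}0&1\\-1&0\end{smallmatrix}\right)$ and $|\gamma|\leq\sqrt{\det G}$, the latter being the $2\times2$ condition $\det W=\det G-\gamma^2\geq0$. Strong duality then yields $C_{\theta_0,G}^{(H)}=\max_{|\gamma|\leq\sqrt{\det G}}\phi(\gamma)$, where $\phi(\gamma)=\min_{f}\Tr WZ(f)$. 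I can evaluate $\phi$ in closed form: writing $\Tr WZ(f)=\Tr G\,{\rm Re}\,Z(f)+2\gamma\,z(f)$ with $z(f)$ the scalar for which ${\rm Im}\,Z(f)=z(f)\,\Omega$, and using ${\rm Re}\,Z(f)=J_{\theta_0}^{(S)^{-1}}+\ket c\bra c$ together with (\ref{eq:A_relation}) and (\ref{eq:b_relation}), the objective is a strictly convex quadratic in $\ket c$ with Hessian $G>0$, whose minimum is
\[
\phi(\gamma)=\Tr GJ_{\theta_0}^{(S)^{-1}}+2\gamma\,({\rm Im}J_{\theta_0}^{(R)^{-1}})_{12}-\gamma^2\,\frac{\bra bG\ket b}{\det G}.
\]
Substituting $\gamma=\beta\sqrt{\det G}\,{\rm sgn}\,({\rm Im}J_{\theta_0}^{(R)^{-1}})_{12}$ turns the linear term into $\beta\,\Tr|\sqrt G\,{\rm Im}J_{\theta_0}^{(R)^{-1}}\sqrt G|$ and the quadratic term into $-\beta^2\bra bG\ket b$ (using $\Tr|\sqrt G\,{\rm Im}A\sqrt G|=2\sqrt{\det G}\,|({\rm Im}A)_{12}|$ in the $2\times2$ case), so $\phi(\gamma)=C_{\theta_0,G}^{(\beta)}$ as in (\ref{eq:bLD_Ab}); meanwhile $|\gamma|\leq\sqrt{\det G}$ becomes $\beta\in[0,1]$, and the opposite sign of $\gamma$ only decreases $\phi$ and may be discarded. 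Thus $\max_{|\gamma|\leq\sqrt{\det G}}\phi(\gamma)=\max_{0\leq\beta\leq1}C_{\theta_0,G}^{(\beta)}$, and the strong-duality identity gives the claim. The case split of Theorem \ref{thm:mld_1} corresponds to whether the unconstrained maximizer of $\phi$ is interior ($\hat\beta<1$) or on the boundary $|\gamma|=\sqrt{\det G}$ ($\hat\beta\geq1$), with $\ket b=0$ the degenerate linear case.

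The step I expect to be the main obstacle is the rigorous justification that there is no duality gap: I must confirm Slater's condition and the attainment of both optima so that the minimax exchange is legitimate. Once that is secured, the inner minimization and the $\gamma\leftrightarrow\beta$ reparameterization are routine algebra that reproduce (\ref{eq:bLD_Ab}) term by term.
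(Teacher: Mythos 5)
Your proof is correct in substance, but it takes a genuinely different route from the paper's. The paper argues by direct computation: it writes $Z(f)={\rm Re}A+\ket f\bra f+\sqrt{-1}({\rm Im}A+\ket b\bra f-\ket f\bra b)$, rescales by $\sqrt G$, expands $\ket{\hat f}$ in the basis $\{\ket{\hat b},\ket{\hat b}^{\perp}\}$ so that one component drops out, reduces the Holevo minimization to a one-variable problem $\min_t\{\Tr G\,{\rm Re}A+\braket{\hat b}{\hat b}t^{2}+2||a|+\braket{\hat b}{\hat b}t|\}$, and checks that its value coincides with (\ref{eq:max1})--(\ref{eq:max2}) using the already-established inequality $C_{\theta_{0},G}^{(H)}\geq\max_{\beta}C_{\theta_{0},G}^{(\beta)}$; Lagrangian duality appears there only as an after-the-fact remark on this reduced one-dimensional program. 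You instead run the duality on the full program $\min\{\Tr GV;\,V\geq Z(f)\}$ with a matrix multiplier $W=G+\gamma\sqrt{-1}\Omega$, $\Omega=\left(\begin{smallmatrix}0&1\\-1&0\end{smallmatrix}\right)$, and show that the dual function is exactly $C_{\theta_{0},G}^{(\beta)}$ of (\ref{eq:bLD_Ab}) under $\gamma=\beta\sqrt{\det G}\,{\rm sgn}({\rm Im}J_{\theta_{0}}^{(R)^{-1}})_{12}$, with dual feasibility $\det W\geq0$ becoming $|\beta|\leq1$. This is more conceptual --- it exhibits $\beta$ as the dual variable, which is precisely the structural point the paper only gestures at --- but it leans on strong duality for a convex program with a positive-semidefinite cone constraint, where the paper needs only elementary calculus. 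I verified your closed form of $\phi(\gamma)$: it rests on the $2\times2$ identities $\Omega^{\intercal}G^{-1}\Omega=G/\det G$ and $\sqrt G\,\Omega\sqrt G=\sqrt{\det G}\,\Omega$, which you should state explicitly since they are exactly where the hypothesis $d=2$ enters. One small repair: your Slater point $V=Z(f)+I$ is not a real matrix when ${\rm Im}Z(f)\neq0$; take instead $V={\rm Re}Z(f)+cI$ with $c>\left\Vert {\rm Im}Z(f)\right\Vert $, which is strictly feasible and keeps the rest of the argument intact.
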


\begin{proof}
By using Corollary \ref{cor:easy_holevo2}, the Holevo bound is
\begin{eqnarray}
C_{\theta_{0},G}^{(H)} & = & \min_{f}\left\{ \Tr G\,Z(f)+\Tr\left|\sqrt{G}{\rm Im}Z(f)\sqrt{G}\right|\right\} ,\label{eq:min_f_2}
\end{eqnarray}
where
\begin{equation}
Z(f)={\rm Re}A+\ket f\bra f+\sqrt{-1}\left({\rm Im}A+\ket b\bra f-\ket f\bra b\right).
\end{equation}
Letting
\begin{equation}
\ket{\hat{b}}:=\sqrt{G}\ket b,
\end{equation}
\begin{equation}
\ket{\hat{f}}:=\sqrt{G}\ket f,
\end{equation}
and
\begin{equation}
\hat{A}:=\sqrt{G}{\rm Im}A\sqrt{G}=a\begin{pmatrix}0 & -1\\
1 & 0
\end{pmatrix}
\end{equation}
with $a\in\R$, then
\begin{eqnarray}
C_{\theta_{0},G}^{(H)} & = & \min_{\hat{f}}\left\{ \Tr G\,{\rm Re}A+\braket{\hat{f}}{\hat{f}}+\Tr\left|\hat{A}+\ket{\hat{b}}\bra{\hat{f}}-\ket{\hat{f}}\bra{\hat{b}}\right|\right\} \\
 & = & \min_{\hat{f}}\left\{ \Tr G\,{\rm Re}A+\braket{\hat{f}}{\hat{f}}+\Tr\left|(a+\hat{b}_{2}\hat{f}_{1}-\hat{b}_{1}\hat{f}_{2})\begin{pmatrix}0 & -1\\
1 & 0
\end{pmatrix}\right|\right\} \\
 & = & \min_{\hat{f}}\left\{ \Tr G\,{\rm Re}A+\braket{\hat{f}}{\hat{f}}+2\left|a+\hat{b}_{2}\hat{f}_{1}-\hat{b}_{1}\hat{f}_{2}\right|\right\} .
\end{eqnarray}
By representing $\ket{\hat{f}}$ as $\ket{\hat{f}}=s\begin{pmatrix}\hat{b}_{1}\\
\hat{b}_{2}
\end{pmatrix}+t\begin{pmatrix}\hat{b}_{2}\\
-\hat{b}_{1}
\end{pmatrix}$ with $s,t\in\R$, we have
\begin{align}
C_{\theta_{0},G}^{(H)} & =\min_{s,t}\left\{ \Tr G\,{\rm Re}A+\braket{\hat{b}}{\hat{b}}s^{2}+\braket{\hat{b}}{\hat{b}}t^{2}+2\left|a+\braket{\hat{b}}{\hat{b}}t\right|\right\}\\
 & =\min_{t}\left\{ \Tr G\,{\rm Re}A+\braket{\hat{b}}{\hat{b}}t^{2}+2\left|a+\braket{\hat{b}}{\hat{b}}t\right|\right\}\\
 & =\min_{t}\left\{ \Tr G\,{\rm Re}A+\braket{\hat{b}}{\hat{b}}t^{2}+2\left|\left|a\right|+\braket{\hat{b}}{\hat{b}}t\right|\right\}\\
 & \leq\min_{\left|a\right|+\braket{\hat{b}}{\hat{b}}t\geq0}\left\{ \Tr G\,{\rm Re}A+\braket{\hat{b}}{\hat{b}}t^{2}+2\left|a\right|+2\braket{\hat{b}}{\hat{b}}t\right\} .\label{eq:min_program}
\end{align}
Let
\begin{equation}
g_{2}(t)=\Tr G\,{\rm Re}A+\braket{\hat{b}}{\hat{b}}t^{2}+2\left|a\right|+2\braket{\hat{b}}{\hat{b}}t.
\end{equation}
Because of (\ref{eq:max1}), (\ref{eq:max2}), and $C_{\theta_{0},G}^{(H)}\geq\max_{0\leq\beta\leq1}C_{\theta_{0},G}^{(\beta)}$
, we have 
\begin{equation}
g_{2}(t)\geq\begin{cases}
\Tr G\,{\rm Re}A+2\left|a\right|-\braket{\hat{b}}{\hat{b}}, & \text{if }\left|a\right|\geq\braket{\hat{b}}{\hat{b}},\\
\Tr G\,{\rm Re}A+\frac{a^{2}}{\braket{\hat{b}}{\hat{b}}} & \text{\text{otherwise}},
\end{cases}
\end{equation}
where the equality is achieved at $t=\max\{-\frac{\left|a\right|}{\braket{\hat{b}}{\hat{b}}},-1\}$. 
\end{proof}
Let us consider the Lagrangian duality of the quadratic programming
(\ref{eq:min_program}). The Lagrangian function is
\begin{align}
\mathscr{L}(t,\lambda) & =\Tr G\,{\rm Re}A+2\left|a\right|+2\braket{\hat{b}}{\hat{b}}t+\braket{\hat{b}}{\hat{b}}t^{2}-\lambda(\left|a\right|+\braket{\hat{b}}{\hat{b}}t)\\
 & =\Tr G\,{\rm Re}A+2\left|a\right|-\lambda\left|a\right|+\braket{\hat{b}}{\hat{b}}((2-\lambda)t+t^{2}).
\end{align}
For any fixed $\lambda\in\R,$ 
$\mathscr{L}(t,\lambda)$ is minimized
at $t=\frac{\lambda-2}{2}$, and the Lagrangian dual function is
\begin{align}
g(\lambda) & =\min_{t}\mathscr{L}(t,\lambda)=\mathscr{L}(\frac{\lambda-2}{2},\lambda)=\Tr G\,{\rm Re}A-(\lambda-2)\left|a\right|-\braket{\hat{b}}{\hat{b}}\frac{(\lambda-2)^{2}}{4}\\
 & =g_{1}(\frac{2-\lambda}{2}).
\end{align}
Hence the Lagrangian dual programming is
\begin{equation}
\max_{\lambda\geq0}g_{1}(\frac{2-\lambda}{2}).
\end{equation}
The solution of this maximization is the same as (\ref{eq:max_program}).
It is known that in quadratic programming the Lagrangian duality problem
has the same solution. This is the reason why the two bounds coincide.

\textcolor{black}{
The optimal observables $B_{1},B_{2}$ for the minimization
of (\ref{eq:holevo_bound2}) to define Holevo bound can be described
by $\beta^{*}:={\rm argmax_{0\leq\beta\leq1}}C_{\theta_{0},G}^{(\beta)}$
given in Theorem \ref{thm:mld_1} as follows. 
From the proof of Theorem \ref{thm:holevo_mld}, 
we see that the minimization of 
(\ref{eq:min_f_2})
is achieved when 
\begin{equation}
\ket f=\ket{f^{(\beta^{*})}}:=\beta^{*}\sqrt{G}^{-1}\begin{pmatrix}0 & -1\\
1 & 0
\end{pmatrix}\sqrt{G}\ket b.\label{eq:beta2f}
\end{equation}
This means the minimization of 
(\ref{eq:min_F})
is achieved when
$F=F^{(\beta^{*})}:=\begin{pmatrix}I\\
\bra{f^{(\beta^{*})}}
\end{pmatrix}({\rm Re}\Sigma)^{-1}$, and the minimization of (\ref{eq:holevo_bound2}) is achieved when
\begin{equation}
B_{i}=B_{i}^{(\beta^{*})}:=\sum_{j=1}^{3}F_{ji}^{(\beta^{*})}D_{j}\qquad(i=1,2).\label{eq:beta2B}
\end{equation}
}

When $\dim\H=2$ and $d=2$, any model $\S$ has two SLDs $L_{1}$
and $L_{2}$ at any point $\theta_{0}$, and $\tilde{\T}=\{X\in\B_{h}(\H);\Tr\rho_{\theta_{0}}X=0\}\supset{\rm span}_{\R}\{L_{1},L_{2}\}$
is $\D_{\rho_{\theta_{0}}}$ invariant three dimensional space. Therefore,
$R$ takes the form of (\ref{eq:D-1model}), thus Theorem \ref{thm:mld_1}
and Theorem \ref{thm:holevo_mld} are applicable. 
This is the essential
reason why the Holevo bound can be expressed by (\ref{eq:suzuki_bound_ori}). 

\textcolor{black}{
More generally, if a two-dimensional smooth parametric
family $\{\sigma_{\xi};\xi\in\Xi\subset\R^{2}\}$ of density operators
on a Hilbert space $\H$ with an open set $\Xi\subset\R^{2}$ is $\D_{\xi}$
invariant, a three-dimensional smooth parametric family $\{\tilde{\rho}_{(\xi,\eta)}=\eta\sigma_{\xi}+(1-\eta)\frac{1}{\dim\H}I;\,\xi\in\Xi\subset\R^{2},0<\eta<1\}$
is also $\D_{(\xi,\eta)}$ invariant. 
Therefore, Theorem \ref{thm:mld_1}
and \ref{thm:holevo_mld} are applicable for any two-dimensional submodel
of $\left\{ \tilde{\rho}_{(\xi,\eta)}\right\} _{(\xi,\eta)}$, and
the maximum logarithmic derivative bound and the Holevo bound can
be calculated explicitly. We show examples below. }

\begin{example}
\label{exa:dim2}Let
\begin{equation}
\left\{ \rho_{\theta}=a(1-|\theta|)\frac{1}{2}(\theta^{1}\sigma_{1}+\theta^{2}\sigma_{2}+\sqrt{1-|\theta|^{2}}\sigma_{3})+(1-a(1-|\theta|))\frac{I}{2};|\theta|<1\right\} 
\end{equation}
be a family of density operators on $\H=\C^{2}$ parameterized by
$\theta=(\theta^{1},\theta^{2})$ with fixed $0<a<1$, where $\sigma_{1},\sigma_{2},\sigma_{3}$
are Pauli matrices. Let $D_{1}=\partial_{1}\rho_{\theta}$, $D_{2}=\partial_{2}\rho_{\theta}$,
$D_{3}=\rho_{\theta}-\frac{I}{2}$. A linear space of observables
${\rm span}_{\R}\left\{ D_{i}\right\} _{i=1}^{d}=\{X\in\B_{h}(\H);\Tr X=0\}$
is $\D_{\theta}$ invariant at any $\theta$. The extended RLD Fisher
information matrix is calculated by $\tilde{J}_{\theta,ij}^{(R)}=\Tr D_{i}\rho_{\theta}^{-1}D_{j}$
and its inverse is
\begin{equation}
\tilde{J}_{\theta,ij}^{(R)^{-1}}=\frac{1}{a^{2}(1-r)}\begin{pmatrix}1+r & -\ii a\sqrt{1-r^{2}} & \frac{1+r}{1-r}\\
\ii a\sqrt{1-r^{2}} & \frac{1}{(1-r)} & \frac{\ii a(1+r)}{\sqrt{1-r^{2}}}\\
\frac{1+r}{1-r} & -\frac{\ii a(1+r)}{\sqrt{1-r^{2}}} & a^{2}(x-1)+\frac{2}{(1-r)^{2}}
\end{pmatrix},
\end{equation}
at $\theta=(r,0)$ with $0\leq r<1$. The inverse $\beta$ Fisher
information matrix $J_{\theta}^{(\beta)^{-1}}$ is the Schur complement
of ${\rm Re}\tilde{J}_{\theta,ij}^{(R)^{-1}}+\beta\ii{\rm Im}\tilde{J}_{\theta,ij}^{(R)^{-1}}$
due to (\ref{eq:R_bLDFisher}). Let us consider lower bounds of $\Tr GV_{\theta_{0}}[M,\hat{\theta}]$
with a SLD weight $G=J_{\theta}^{(S)}$. The $\beta$ bound is
\begin{equation}
C_{\theta_{0},G}^{(\beta)}=2+\frac{2a\sqrt{(1-a^{2}(1-r)^{2})(2-a^{2}(1-r)^{3})(1-r)^{3}}}{2-a^{2}(1-r)^{3}}\beta-\frac{a^{2}(1-r)^{2}(1+r)}{2-a^{2}(1-r)^{3}}\beta^{2}.
\end{equation}
By using Theorem \ref{thm:mld_1}, we see that the maximum of $C_{\theta_{0},G}^{(\beta)}$
is achieved by
\begin{equation}
\beta=\min\left\{ 1,\frac{1}{a(1+r)}\sqrt{\frac{(1-a^{2}(1-r)^{2})(2-a^{2}(1-r)^{3})}{1-r}}\right\} .
\end{equation}
In Fig \ref{fig:beta_plot}(left), the behavior of the optimal $\beta$
is plotted as a function of $r$ when $a=0.95$. Due to Theorem \ref{thm:holevo_mld},
$\max_{0\leq\beta\leq1}C_{\theta_{0},G}^{(\beta)}$ is the same as
the Holevo bound $C_{\theta_{0},G}^{(H)}$. This result illustrates
a principle behind the explicit expression of the Holevo bound (\ref{eq:suzuki_bound_ori}).
\end{example}

\begin{example}
\label{exa:dim4}Here we show an example of the case when $\dim\H>2$.
Let
\begin{equation}
\left\{ \rho_{\theta}=a(1-|\theta|)\left\{ \frac{1}{2}(\theta^{1}\sigma_{1}+\theta^{2}\sigma_{2}+\sqrt{1-|\theta|^{2}}\sigma_{3})\right\} ^{\otimes2}+(1-a(1-|\theta|))\frac{I}{4};|\theta|<1\right\} 
\end{equation}
be a family of density operators on $\H=\C^{4}$ parameterized by
$\theta=(\theta^{1},\theta^{2})$ with fixed $0<a<1$. Let $D_{1}=\partial_{1}\rho_{\theta}$,
$D_{2}=\partial_{2}\rho_{\theta}$, $D_{3}=\rho_{\theta}-\frac{I}{4}$.
A linear space of observables ${\rm span}_{\R}\left\{ D_{i}\right\} _{i=1}^{d}$
is $\D_{\theta}$ invariant at any $\theta$. The extended RLD Fisher
information matrix is calculated by $\tilde{J}_{\theta,ij}^{(R)}=\Tr D_{i}\rho_{\theta}^{-1}D_{j}$.
By the similar calculation as in Example \ref{exa:dim2}, we see that
the maximum of $C_{\theta_{0},G}^{(\beta)}$ is achieved by
\begin{equation}
\beta=\min\left\{ 1,\frac{1}{3a(1+r)}\sqrt{\frac{(1-a(1-r))(1+3a(1-r))(7-r-a(1-r)(-11+12a(1-r)^{2}+5r))}{1-r}}\right\} 
\end{equation}
for a SLD weight $G=J_{\theta}^{(S)}$ at $\theta=(r,0)$ with $0\leq r<1$.
In Fig \ref{fig:beta_plot}(right), the behavior of the optimal $\beta$
is plotted as a function of $r$ when $a=0.95$. Due to Theorem \ref{thm:holevo_mld},
$\max_{0\leq\beta\leq1}C_{\theta_{0},G}^{(\beta)}$ is the same as
the Holevo bound $C_{\theta_{0},G}^{(H)}$. This example is not included
in the result of (\ref{eq:suzuki_bound_ori}) for $\dim\H=2$.

\begin{figure}
\begin{centering}
\includegraphics{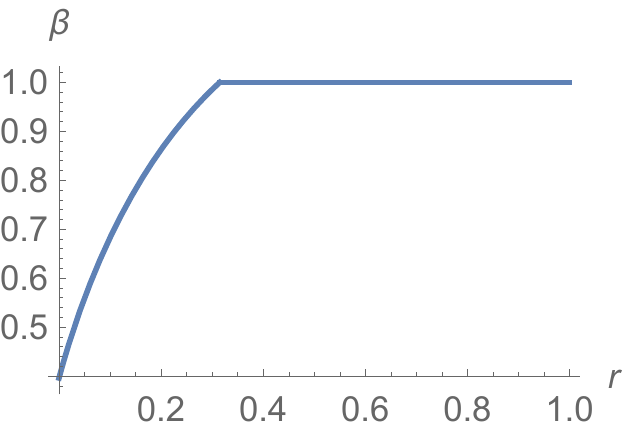}\includegraphics{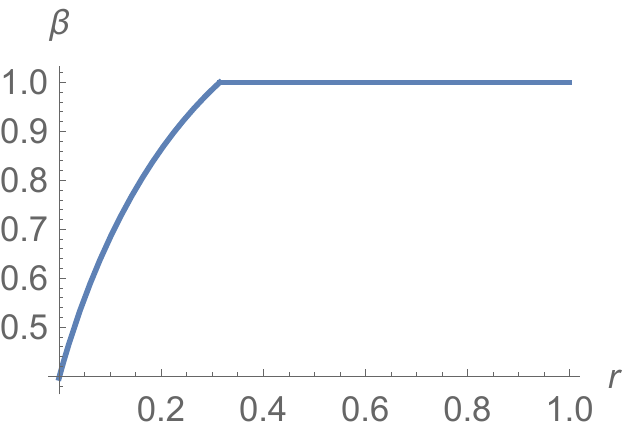}
\par\end{centering}
\caption{\label{fig:beta_plot}The behavior of the optimal $\beta$ as a function
of $r$ for Example \ref{exa:dim2} (left) and Example \ref{exa:dim4}
(right) with $a=0.95$.}
\end{figure}
\end{example}

\textcolor{black}{Next, let us show examples of models that can achieve
the maximum logarithmic derivative bounds. It is known that the Holevo
bounds can be achieved for quantum Gaussian shift models\cite{holevo}
and pure states models\cite{matsumoto_pure}. The Holevo bounds can
also be achieved as SLD bounds for models that have commutative SLDs.
We can derive the similar property by combining the above models that
have the achievable Holevo bounds. The following examples show models
of a tensor product of a one-dimensional model and quantum Gaussian
shift models or pure states model that have achievable maximum logarithmic
derivative bounds.}
\begin{example}
\textcolor{black}{\label{exa:gauss}Let
\begin{equation}
\left\{ \sigma_{\eta}^{(1)};\text{\ensuremath{\eta_{1}<\eta<\eta_{2}}}\right\} 
\end{equation}
be any one-dimensional family of density operators on a Hilbert space
$\H_{1}$ parameterized by $\eta\in\R$, and let
\begin{equation}
\left\{ \sigma_{\xi}^{(2)};\,\xi\in\R^{2}\right\} 
\end{equation}
be a two-dimensional family of quantum Gaussian states, where $\sigma_{\xi}^{(2)}$
is a quantum Gaussian state\cite{holevo,YFG} represented on a Hilbert
space $\H_{2}$ defined by a characteristic function
\begin{equation}
\varphi_{\xi}^{(2)}(\zeta)=\Tr\sigma_{\xi}^{(2)}e^{\ii\zeta^{i}X_{i}}=e^{-\frac{1}{2}s^{2}|\xi|^{2}+\ii\xi^{\intercal}\zeta}\qquad(\zeta\in\R^{2})
\end{equation}
with $s\geq1$ and canonical observables $X_{1},X_{2}$ such that
\begin{equation}
[X_{1},X_{2}]=2\ii I.
\end{equation}
Let us consider a three-dimensional quantum statistical model
\begin{equation}
\left\{ \tilde{\rho}_{(\eta,\xi)}=\sigma_{\eta}^{(1)}\otimes\sigma_{\xi}^{(2)};\text{\ensuremath{\eta_{1}<\eta<\eta_{2}},\,\ensuremath{\xi\in\R^{2}}}\right\} .
\end{equation}
Since it is known that $\tilde{D}_{i}^{(2)}=\frac{1}{s^{2}}X_{i}$
($i=1,2$) are the SLDs of $\sigma_{\xi}^{(2)}$ and their tangent
space is $\D_{\xi}$ invariant, the SLD tangent space of this three-dimensional
model is $\D_{(\eta,\xi)}$ invariant at every $(\eta,\xi)$. Therefore,
Theorem \ref{thm:mld_1} and \ref{thm:holevo_mld} are applicable
for any two-dimensional submodel $\left\{ \rho_{\theta};\,\theta\in\Theta\subset\R^{2}\right\} $
of $\left\{ \tilde{\rho}_{(\eta,\xi)}\right\} _{(\eta,\xi)}$, and
the maximum logarithmic derivative bound and the Holevo bound can
be calculated explicitly. Furthermore, we can show that the maximum
logarithmic derivative bound can be achieved. Let $D_{1},D_{2}$ be
SLDs of $\rho_{\theta}$ at $\theta=\theta_{0}$, let $D_{3}$ and
$R$ be an observable and a $3\times3$ matrix obtained in the same
way as (\ref{eq:D-1model}). Note that $D_{1},D_{2},D_{3}$ are in
${\rm span}_{\R}\{\tilde{D}^{(1)}\otimes I,I\otimes\tilde{D}_{1}^{(2)},I\otimes\tilde{D}_{2}^{(2)}\}$,
where $\tilde{D}^{(1)}$ is the SLD of $\sigma_{\eta}^{(1)}$ and
$\tilde{D}_{i}^{(2)}=\frac{1}{s^{2}}X_{i}$ ($i=1,2$) are the SLDs
of $\sigma_{\xi}^{(2)}$. Due to Theorem \ref{thm:mld_1}, the maximum
of $C_{\theta_{0},G}^{(\beta)}$ is achieved when $\beta=\beta^{*}:=\min\{1,\frac{\Tr\left|\sqrt{G}{\rm Im}A\sqrt{G}\right|}{2\bra bG\ket b}\}$
for any weight matrix $G$. By using (\ref{eq:beta2B}), it can be
seen that the minimization of (\ref{eq:holevo_bound2}) is achieved
when $B_{i}=B_{i}^{(\beta^{*})}:=\sum_{j=1}^{3}F_{ij}^{(\beta^{*})}D_{j}$
($i=1,2$). Note that $B_{1},B_{2}$ satisfy a commutation relation
\begin{equation}
[B_{1},B_{2}]=-2\ii{\rm Im}Z(B)_{12}I.
\end{equation}
Let $\sigma^{(3)}$ be another ancilla Gaussian states defined by
a characteristic function
\begin{equation}
\varphi^{(3)}(\zeta)=\Tr\sigma^{(3)}e^{\ii\zeta^{i}Y_{i}}=e^{-\frac{1}{2}\zeta^{\intercal}V^{(3)}\zeta}\qquad(\zeta\in\R^{2})
\end{equation}
with canonical observables $Y_{1},Y_{2}$ such that
\begin{equation}
[Y_{1},Y_{2}]=2\ii{\rm Im}Z(B)_{12}I
\end{equation}
and a real positive matrix
\begin{equation}
V^{(3)}=\sqrt{G}^{-1}\left|\sqrt{G}{\rm Im}Z(B)\sqrt{G}\right|\sqrt{G}^{-1}.
\end{equation}
It can be seen that two observables $\hat{B}_{i}:=\theta_{0}^{i}+B_{i}\otimes I+I\otimes Y_{i}$
($i=1,2$) can be measured simultaneously because they are commutative.
Further, these observables satisfy locally unbiased conditions and
achieve the Holevo bound, i.e.,
\begin{align}
\Tr(\rho_{\theta}\otimes\sigma^{(3)})\hat{B}_{i} & =\theta_{0}^{i}\qquad(1\leq i\leq2)\\
\Tr(\partial_{i}\rho_{\theta}\otimes\sigma^{(3)})\hat{B}_{j} & =\delta_{ij}\qquad(1\leq i,j\leq2)\\
\Tr(\rho_{\theta}\otimes\sigma^{(3)})\left(\hat{B}_{i}-\theta_{0}^{i}\right)\left(\hat{B}_{j}-\theta_{0}^{j}\right) & =({\rm Re}Z(B)+V^{(3)})_{ij}\qquad(1\leq i,j\leq2).
\end{align}
}
\end{example}

\begin{example}
\textcolor{black}{\label{exa:pure}Let
\begin{equation}
\left\{ \sigma_{\eta}^{(1)};\text{\ensuremath{\eta_{1}<\eta<\eta_{2}}}\right\} 
\end{equation}
be any one-dimensional family of density operators on a Hilbert space
$\H_{1}$ parameterized by $\eta\in\R$, and let
\begin{equation}
\left\{ \sigma_{\xi}^{(2)}=\ket{\psi_{\xi}}\bra{\psi_{\xi}};\,\xi\in\Xi\subset\R^{2}\right\} 
\end{equation}
be a two-dimensional family of pure states on a Hilbert space $\H_{2}$
with an open set $\Xi\subset\R^{2}$. Let us consider a three-dimensional
quantum statistical model
\begin{equation}
\left\{ \tilde{\rho}_{(\eta,\xi)}=\sigma_{\eta}^{(1)}\otimes\sigma_{\xi}^{(2)};\text{\ensuremath{\eta_{1}<\eta<\eta_{2}},\,\ensuremath{\xi\in\Xi\subset\R^{2}}}\right\} .
\end{equation}
Suppose ${\rm span}_{\R}\left\{ \partial_{i}\sigma_{\xi}^{(2)}=\ket{\partial_{i}\psi_{\xi_{0}}}\bra{\psi_{\xi_{0}}}+\ket{\psi_{\xi_{0}}}\bra{\partial_{i}\psi_{\xi_{0}}}\right\} _{i=1}^{2}$
is $\D_{\xi_{0}}$ invariant at a fixed point $\xi_{0}$. It can be
seen that $\D_{\xi_{0}}$ invariance for $\left\{ \sigma_{\xi}^{(2)}\right\} _{\xi}$
is equivalent to $\ket{\partial_{2}\psi_{\xi_{0}}}\in{\rm span}_{\R}\left\{ \ket{\partial_{1}\psi_{\xi_{0}}},\ii\ket{\partial_{1}\psi_{\xi_{0}}}\right\} $.
Since this three-dimensional model is also $\D_{(\eta_{0},\xi_{0})}$
invariant at a fixed point $(\eta_{0},\xi_{0})$, Theorem \ref{thm:mld_1}
and \ref{thm:holevo_mld} are applicable for two-dimensional submodel
$\left\{ \rho_{\theta};\,\theta\in\Theta\subset\R^{2}\right\} $ of
$\left\{ \tilde{\rho}_{(\xi,\eta)}\right\} _{(\xi,\eta)}$ at $\theta_{0}$
such that $\rho_{\theta_{0}}=\tilde{\rho}_{(\eta_{0},\xi_{0})}$,
and the maximum logarithmic derivative bound and the Holevo bound
can be calculated explicitly. Furthermore, we can show that the maximum
logarithmic derivative bound can be achieved. Let $D_{1},D_{2}$ be
SLDs of $\rho_{\theta}$ at $\theta=\theta_{0}$, let $D_{3}$ and
$R$ be an observable and a $3\times3$ matrix obtained in the same
way as (\ref{eq:D-1model}). Due to Theorem \ref{thm:mld_1}, the
maximum of $C_{\theta_{0},G}^{(\beta)}$ is achieved when $\beta=\beta^{*}:=\min\{1,\frac{\Tr\left|\sqrt{G}{\rm Im}A\sqrt{G}\right|}{2\bra bG\ket b}\}$
for any weight matrix $G$. By using (\ref{eq:beta2B}), it can be
seen that the minimization of (\ref{eq:holevo_bound2}) is achieved
when $B_{i}=B_{i}^{(\beta^{*})}:=\sum_{j=1}^{3}F_{ji}^{(\beta^{*})}D_{j}$
($i=1,2$). Because $D_{1},D_{2},D_{3}$ are in ${\rm span}_{\R}\{\tilde{D}^{(1)}\otimes I,I\otimes\tilde{D}_{1}^{(2)},I\otimes\tilde{D}_{2}^{(2)}\}$,
where $\tilde{D}^{(1)}$ is the SLD of $\sigma_{\eta}^{(1)}$ and
$\tilde{D}_{i}^{(2)}=2\partial_{i}\sigma_{\xi_{0}}^{(2)}$ ($i=1,2$)
are the SLDs of $\sigma_{\xi}^{(2)}$, there exist $1\times2$ and
$2\times2$ real matrix $F^{(1)}$ and $F^{(2)}$ such that $B_{i}^{(\beta^{*})}=F_{1i}^{(1)}\tilde{D}^{(1)}\otimes I+\sum_{j=1}^{2}F_{ji}^{(2)}I\otimes\tilde{D}_{j}^{(2)}$.
Let $B_{i}^{(1)}:=F_{1i}^{(1)}\tilde{D}^{(1)}\in\B(\H_{1})$ and $B_{i}^{(2)}:=\sum_{j=1}^{2}F_{ji}^{(2)}\tilde{D}_{j}^{(2)}\in\B(\H_{2})$,
and let $Z_{ij}^{(1)}=\Tr\sigma_{\eta_{0}}^{(1)}B_{j}^{(1)}B_{i}^{(1)}$
and $Z_{ij}^{(2)}=\Tr\sigma_{\xi_{0}}^{(2)}B_{j}^{(2)}B_{i}^{(2)}$.
Because $\left\{ B_{i}^{(1)}\otimes I\right\} _{i=1}^{2}$ and $\left\{ I\otimes B_{i}^{(2)}\right\} _{i=1}^{2}$
are independent, $Z(B^{(\beta^{*})})=Z^{(1)}+Z^{(2)}.$ Note that
$Z^{(1)}$ is a real matrix. Let $\ket{\psi^{(3)}},\ket{l_{1}^{(3)}},\ket{l_{2}^{(3)}}\in\H_{3}$
be vectors in a Hilbert space $\H_{3}=\C^{2}$ such that $\braket{\psi^{(3)}}{l_{1}^{(3)}}=\braket{\psi^{(3)}}{l_{2}^{(3)}}=0$,
\begin{equation}
\braket{l_{j}^{(3)}}{l_{i}^{(3)}}=\left(V^{(3)}-\ii{\rm Im}Z^{(2)}\right)_{ij},
\end{equation}
 and $\left\Vert \ket{\psi^{(3)}}\right\Vert =1$ with a positive
real matrix $V^{(3)}=\sqrt{G}^{-1}\left|\sqrt{G}{\rm Im}Z^{(2)}\sqrt{G}\right|\sqrt{G}^{-1}.$
Because 
\begin{align}
\ket{\hat{\psi}^{(3)}} & :=\ket{\psi_{\xi_{0}}}\otimes\ket{\psi^{(3)}}\\
\ket{\hat{l}_{i}^{(3)}} & :=B_{i}^{(2)}\ket{\psi_{\xi_{0}}}\otimes\ket{\psi^{(3)}}+\ket{\psi_{\xi_{0}}}\otimes\ket{l_{i}^{(3)}}\qquad(i=1,2)
\end{align}
satisfy $\braket{\hat{\psi}^{(3)}}{\hat{l}_{1}^{(3)}}=\braket{\hat{\psi}^{(3)}}{\hat{l}_{2}^{(3)}}=0$
and $\braket{\hat{l}_{j}^{(3)}}{\hat{l}_{i}^{(3)}}=\left({\rm Re}Z^{(2)}+V^{(3)}\right)_{ij}\in\R$,
there exist an orthonormal basis $\{\ket k\}_{k=1}^{\dim\H_{2}\otimes\H_{3}}$
of $\H_{2}\otimes\H_{3}$ such that $\braket k{\hat{\psi}^{(3)}},\braket k{\hat{l}_{1}^{(3)}},\braket k{\hat{l}_{2}^{(3)}}$
are real numbers and $\braket k{\hat{\psi}^{(3)}}\not=0$ for $1\leq k\leq\dim\H_{2}\otimes\H_{3}$.
It can be seen that two observables 
\begin{equation}
\hat{B}_{i}=\theta_{0}^{i}+B_{i}^{(1)}\otimes I+I\otimes\left[\sum_{k}\frac{\braket k{\hat{l}_{i}^{(3)}}}{\braket k{\hat{\psi}^{(3)}}}\ket k\bra k\right]
\end{equation}
($i=1,2$) can be measured simultaneously, and they satisfy locally
unbiased conditions and achieve the Holevo bound, i.e.,
\begin{align}
\Tr(\rho_{\theta_{0}}\otimes\sigma^{(3)})\hat{B}_{i} & =\theta_{0}^{i}\qquad(1\leq i\leq2)\\
\Tr(\partial_{i}\rho_{\theta_{0}}\otimes\sigma^{(3)})\hat{B}_{j} & =\delta_{ij}\qquad(1\leq i,j\leq2)\\
\Tr(\rho_{\theta_{0}}\otimes\sigma^{(3)})\left(\hat{B}_{i}-\theta_{0}^{i}\right)\left(\hat{B}_{j}-\theta_{0}^{j}\right) & =({\rm Re}Z(B)+V^{(3)})_{ij}\qquad(1\leq i,j\leq2),
\end{align}
where $\sigma^{(3)}=\ket{\psi^{(3)}}\bra{\psi^{(3)}}$.}
\end{example}

\section{\label{sec:Conclusion}Conclusion}

In this paper, we focused on a logarithmic derivative $L_{i}^{(\beta)}$
lies between SLD $L_{i}^{(S)}$ and RLD $L_{i}^{(R)}$ with $\beta\in[0,1]$
to obtain lower bounds of weighted trace of covariance $\Tr GV_{\theta_{0}}[M,\hat{\theta}]$
of a locally unbiased estimator $(M,\hat{\theta})$ at $\theta_{0}$
of a parametric family of quantum states. We showed that all monotone
metrics induce lower bounds of $\Tr GV_{\theta_{0}}[M,\hat{\theta}]$,
and the maximum logarithmic derivative bound $\max_{0\leq\beta\leq1}C_{\theta_{0},G}^{(\beta)}$
is the largest bound among them. We showed that $\max_{0\leq\beta\leq1}C_{\theta_{0},G}^{(\beta)}$
has explicit solution when the $d$ dimensional model has $d+1$ dimensional
real space $\tilde{\T}\supset{\rm span}_{\R}\{L_{i}^{(S)}\}_{i=1}^{d}$
such that $\D_{\rho_{\theta_{0}}}(\tilde{\T})\subset\tilde{\T}$ at
$\theta_{0}\in\Theta$. Furthermore, when $d=2$, we showed that the
maximization problem $\max_{0\leq\beta\leq1}C_{\theta_{0},G}^{(\beta)}$
is the Lagrangian duality of the minimization problem to define Holevo
bound, and is the same as the Holevo bound. 
This explicit solution is the generalization of the solution \eqref{eq:suzuki_bound_ori} given for a two dimensional Hilbert space. 

\section*{Acknowledgment}

The author is grateful to Prof. A. Fujiwara for valuable comments.

\appendix

\section{Proof of (\ref{eq:trabs})\label{sec:trabs_proof}}

In this appendix, we give a proof of (\ref{eq:trabs}). 
\begin{lem}
For a $d\times d$ positive complex matrix $J$ and a real positive
matrix $G$,
\begin{equation}
\min\left\{ \Tr GV;\,V\text{ is a \ensuremath{d\times d} real matrix such that }V\geq J\right\} =\Tr GJ+\Tr\left|\sqrt{G}{\rm Im}J\sqrt{G}\right|.
\end{equation}
\end{lem}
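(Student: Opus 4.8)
The plan is to reduce the problem to its imaginary part and then exploit the fact that a real symmetric matrix is invariant under complex conjugation. First I would observe that for the constraint $V\geq J$ to be meaningful the difference $V-J$ must be Hermitian, which forces the real matrix $V$ to be symmetric; this is automatic in our application, where $V$ is a covariance matrix. Writing $J={\rm Re}J+\ii\,{\rm Im}J$ with ${\rm Re}J$ real symmetric and ${\rm Im}J$ real antisymmetric, I would set $W:=V-{\rm Re}J$, so that $V\geq J$ becomes $W\geq \ii\,{\rm Im}J$ with $W$ real symmetric. Because $G$ is symmetric and ${\rm Im}J$ is antisymmetric, $\Tr G\,{\rm Im}J=0$, hence $\Tr GV=\Tr GJ+\Tr GW$, and it remains to prove $\min_{W}\Tr GW=\Tr\left|\sqrt{G}\,{\rm Im}J\sqrt{G}\right|$.

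Next I would eliminate $G$ by a congruence transformation. Putting $\tilde W:=\sqrt{G}\,W\sqrt{G}$, which is again real symmetric, the constraint becomes $\tilde W\geq H$ where $H:=\ii\sqrt{G}\,{\rm Im}J\sqrt{G}$, and $\Tr GW=\Tr\tilde W$. The matrix $H$ is Hermitian (it is $\ii$ times a real antisymmetric matrix), so its eigenvalues are real and come in pairs $\pm\mu$ symmetric about the origin; moreover $\overline{H}=H^{\intercal}=-H$. Thus the whole problem reduces to showing $\min\{\Tr\tilde W:\tilde W\text{ real symmetric},\ \tilde W\geq H\}=\Tr|H|$.

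For the lower bound I would use the conjugation trick, which I expect to be the heart of the argument. Since $\tilde W$ is real, conjugating the positive-semidefinite inequality $\tilde W-H\geq0$ entrywise gives $\tilde W-\overline{H}\geq0$, i.e.\ $\tilde W\geq-H$; combined with $\tilde W\geq H$ this yields $\tilde W\geq\pm H$. Writing the spectral decomposition $H=\sum_{k}\lambda_{k}\ket{v_k}\bra{v_k}$ and testing the appropriate one of the two inequalities on each $\ket{v_k}$ gives $\bra{v_k}\tilde W\ket{v_k}\geq|\lambda_{k}|$ for every $k$, and summing over the orthonormal basis $\{\ket{v_k}\}$ produces $\Tr\tilde W\geq\sum_{k}|\lambda_{k}|=\Tr|H|$.

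Finally, for achievability I would exhibit $\tilde W=|H|$. Since $H^{2}=-\left(\sqrt{G}\,{\rm Im}J\sqrt{G}\right)^{2}$ is a real symmetric positive matrix, $|H|=\sqrt{H^{2}}$ is real symmetric; it satisfies $|H|\geq H$ for any Hermitian $H$, and $\Tr|H|$ equals the lower bound, so the minimum is attained. Undoing the two substitutions gives $\min_{V}\Tr GV=\Tr GJ+\Tr\left|\sqrt{G}\,{\rm Im}J\sqrt{G}\right|$. The only genuinely delicate point is the realness constraint on $V$: it is exactly what licenses the passage from $\tilde W\geq H$ to $\tilde W\geq-H$, and it is what makes the answer $\Tr|H|$ rather than the much smaller $\Tr H$ one would obtain for an unconstrained Hermitian minimizer.
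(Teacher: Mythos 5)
Your proof is correct and follows essentially the same route as the paper's: both arguments use the realness of $V$ to upgrade $V\geq J$ to $\sqrt{G}V\sqrt{G}\geq\sqrt{G}\,{\rm Re}J\sqrt{G}\pm\sqrt{-1}\sqrt{G}\,{\rm Im}J\sqrt{G}$, test against the eigenbasis of $\sqrt{G}\,{\rm Im}J\sqrt{G}$ to get the lower bound $\Tr G\,{\rm Re}J+\Tr\left|\sqrt{G}\,{\rm Im}J\sqrt{G}\right|$, and exhibit the same minimizer $V={\rm Re}J+\sqrt{G^{-1}}\left|\sqrt{G}\,{\rm Im}J\sqrt{G}\right|\sqrt{G^{-1}}$. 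Your presentation merely makes explicit (via the substitutions $W=V-{\rm Re}J$ and $H=\sqrt{-1}\sqrt{G}\,{\rm Im}J\sqrt{G}$) the conjugation step that the paper leaves implicit in its ``$\pm$''.
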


\begin{proof}
Let $\left\{ \ket i\right\} _{i=1}^{d}$ be normalized eigenvectors
of $\sqrt{G}{\rm Im}J\sqrt{G}$. For a $d\times d$ real matrix $V$
such that $V\geq J$, because
\begin{equation}
\sqrt{G}V\sqrt{G}\geq\sqrt{G}{\rm Re}J\sqrt{G}\pm\sqrt{-1}\sqrt{G}{\rm Im}J\sqrt{G},
\end{equation}
we have
\begin{equation}
\bra i\sqrt{G}V\sqrt{G}\ket i\geq\bra i\sqrt{G}{\rm Re}J\sqrt{G}\ket i+\left|\bra i\sqrt{G}{\rm Im}J\sqrt{G}\ket i\right|.
\end{equation}
Therefore we obtain the inequality
\begin{align}
\sum_{i=1}^{d}\bra i\sqrt{G}V\sqrt{G}\ket i & =\Tr\sqrt{G}V\sqrt{G}\\
 & \geq\sum_{i=1}^{d}\{\bra i\sqrt{G}{\rm Re}J\sqrt{G}\ket i+\left|\bra i\sqrt{G}{\rm Im}J\sqrt{G}\ket i\right|\}\\
 & =\Tr\sqrt{G}{\rm Re}J\sqrt{G}+\Tr\left|\sqrt{G}{\rm Im}J\sqrt{G}\right|.
\end{align}
The equality is achieved when
\begin{equation}
V={\rm Re}J+\sqrt{G^{-1}}\left|\sqrt{G}{\rm Im}J\sqrt{G}\right|\sqrt{G^{-1}}.
\end{equation}
\end{proof}

\section{Derivation of Holevo bound \label{sec:Holevo_proof}}

In this appendix, it is proved briefly that the Holevo bound is lower
than the weighted trace of the covariance of any unbiased estimator.
\begin{thm}
Let $\S=\left\{ \rho_{\theta};\,\theta\in\Theta\subset\R^{d}\right\} $
be a smooth parametric family of density operators on a finite dimensional
Hilbert space $\H$. For a locally unbiased estimator $(M,\hat{\theta})$
at $\theta_{0}$ and a $d\times d$ positive real matrix $G$, 
\begin{equation}
\Tr GV_{\theta_{0}}[M,\hat{\theta}]\geq C_{\theta_{0},G}^{(H)},\label{eq:holevo_appendix}
\end{equation}
where $C_{\theta_{0},G}^{(H)}$ is the Holevo bound defined by (\ref{eq:holevo_bound}).
\end{thm}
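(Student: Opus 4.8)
The plan is to produce, out of the given estimator, a feasible point of the minimization problem (\ref{eq:holevo_bound}) defining $C_{\theta_{0},G}^{(H)}$ whose objective value equals $\Tr GV_{\theta_{0}}[M,\hat{\theta}]$; the inequality (\ref{eq:holevo_appendix}) is then immediate, since $C_{\theta_{0},G}^{(H)}$ is a minimum over all feasible points. The construction I would use is the centered estimator-observables $B_{i}:=\sum_{x\in\X}(\hat{\theta}^{i}(x)-\theta_{0}^{i})M_{x}$ for $i=1,\dots,d$. Each $B_{i}$ is Hermitian, because the POVM elements $M_{x}$ are positive and the coefficients $\hat{\theta}^{i}(x)-\theta_{0}^{i}$ are real.

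The first routine step is to verify the local unbiasedness constraint appearing in (\ref{eq:holevo_bound}). Expanding $\Tr\partial_{i}\rho_{\theta_{0}}B_{j}$ and using the two local unbiasedness conditions on $(M,\hat{\theta})$ together with $\sum_{x}M_{x}=I$ and $\Tr\partial_{i}\rho_{\theta_{0}}=\partial_{i}\Tr\rho_{\theta_{0}}=0$, one finds $\Tr\partial_{i}\rho_{\theta_{0}}B_{j}=\delta_{ij}$, so the family $\{B_{j}\}$ is admissible for (\ref{eq:holevo_bound}).

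The core of the argument is the matrix inequality $V_{\theta_{0}}[M,\hat{\theta}]\geq Z(B)$, where $Z_{ij}(B)=\Tr\rho_{\theta_{0}}B_{j}B_{i}$. I would test it against an arbitrary $c\in\C^{d}$. Setting $b_{x}:=\sum_{i}c_{i}(\hat{\theta}^{i}(x)-\theta_{0}^{i})$ and $B_{c}:=\sum_{i}c_{i}B_{i}=\sum_{x}b_{x}M_{x}$, a direct computation gives $c^{*}V_{\theta_{0}}[M,\hat{\theta}]\,c=\sum_{x}|b_{x}|^{2}\Tr\rho_{\theta_{0}}M_{x}$ and $c^{*}Z(B)\,c=\Tr\rho_{\theta_{0}}B_{c}B_{c}^{*}$, so it suffices to establish the operator inequality
\begin{equation}
\sum_{x}|b_{x}|^{2}M_{x}\geq B_{c}B_{c}^{*}=\Bigl(\sum_{x}b_{x}M_{x}\Bigr)\Bigl(\sum_{x}\bar{b}_{x}M_{x}\Bigr)
\end{equation}
and then take the trace against $\rho_{\theta_{0}}\geq0$. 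This operator inequality is where I expect the only real work. I would derive it from positivity of the block operator $\sum_{x}(v_{x}v_{x}^{*})\otimes M_{x}\geq0$ with $v_{x}=(b_{x},1)^{\intercal}\in\C^{2}$, whose lower-right block equals $\sum_{x}M_{x}=I$; the Schur complement (Appendix \ref{sec:schur}) of this block operator with respect to its invertible lower-right block is exactly $\sum_{x}|b_{x}|^{2}M_{x}-B_{c}B_{c}^{*}\geq0$. Equivalently, this is the Kadison--Schwarz inequality for the unital completely positive map $a\mapsto\sum_{x}a_{x}M_{x}$ adjoint to the measurement channel $T^{(M)}$ of Section \ref{sec:mld}.

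Finally I would assemble the pieces: $V_{\theta_{0}}[M,\hat{\theta}]$ is a real matrix and $\{B_{j}\}$ are admissible Hermitian observables with $V_{\theta_{0}}[M,\hat{\theta}]\geq Z(B)$, so $(V_{\theta_{0}}[M,\hat{\theta}],B)$ is a feasible point of (\ref{eq:holevo_bound}); hence $\Tr GV_{\theta_{0}}[M,\hat{\theta}]\geq C_{\theta_{0},G}^{(H)}$. Every step except the operator inequality is straightforward bookkeeping with the unbiasedness constraints and the quadratic-form reformulation, so the main obstacle is genuinely concentrated in that one positivity statement.
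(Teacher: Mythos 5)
Your proof is correct and takes essentially the same approach as the paper's Appendix B: the same feasible point $B_{i}=\sum_{x\in\X}(\hat{\theta}^{i}(x)-\theta_{0}^{i})M_{x}$, the same unbiasedness check, and the same reduction to the matrix inequality $V_{\theta_{0}}[M,\hat{\theta}]\geq Z(B)$. The only difference is how that inequality is justified: the paper writes $V_{\theta_{0}}[M,\hat{\theta}]_{ij}-Z_{ij}(B)$ as an explicit sum over $x$ of manifestly positive-semidefinite terms (a completed square), whereas you obtain it from the Kadison--Schwarz inequality via a block-operator Schur complement; the two arguments are equivalent.
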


\begin{proof}
Let
\begin{equation}
X^{M,i}:=\sum_{x\in\X}(\hat{\theta}^{i}(x)-\theta_{0}^{i})M_{x}.
\end{equation}
It follows that
\begin{equation}
V_{\theta_{0}}[M,\hat{\theta}]_{ij}-Z(X^{M})_{ij}=\sum_{x\in\X}\Tr\left[(\hat{\theta}^{i}(x)-\theta_{0}^{i}-X^{M,i})\right]\rho_{\theta_{0}}\left[(\hat{\theta}^{j}(x)-\theta_{0}^{j}-X^{M,j})\right]M_{x},
\end{equation}
thus
\begin{equation}
V_{\theta_{0}}[M,\hat{\theta}]\geq Z(X^{M}).
\end{equation}
Further, $X^{M,i}$ satisfies $\Tr\partial_{i}\rho_{\theta_{0}}X^{M,j}=\delta_{i}^{j}$.
Then (\ref{eq:holevo_appendix}) is proved.
\end{proof}

\section{\label{sec:schur}Schur complement}

In this paper, we utilize Schur complement. Suppose $A_{1},A_{2},A_{3}$
are $p\times p,p\times q,q\times q$ complex matrices such that $A_{3}$ is invertible. 
The Schur complement of the block $A_{3}$ of $A=\begin{pmatrix}A_{1} & A_{2}^{*}\\
A_{2} & A_{3}
\end{pmatrix}$ is defined by
\begin{equation}
A/A_{3}:=A_{1}-A_{2}^{*}A_{3}^{-1}A_{2}.
\end{equation}
The matrix $A$ can be decomposed as
\begin{align*}
A=\begin{pmatrix}A_{1} & A_{2}^{*}\\
A_{2} & A_{3}
\end{pmatrix} & =\begin{pmatrix}I & A_{2}^{*}A_{3}^{-1}\\
0 & I
\end{pmatrix}\begin{pmatrix}A/A_{3} & 0\\
0 & A_{3}
\end{pmatrix}\begin{pmatrix}I & 0\\
A_{3}^{-1}A_{2} & I
\end{pmatrix}.
\end{align*}
Therefore ${\rm rank}A={\rm rank}A_{3}$ if and only if $A/A_{3}=0$.
When $A$ is invertible, $A/A_{3}$ is also invertible and
\begin{equation}
A^{-1}=\begin{pmatrix}\left(A/A_{3}\right)^{-1} & -\left(A/A_{3}\right)^{-1}A_{2}^{*}A_{3}^{-1}\\
-A_{3}^{-1}A_{2}\left(A/A_{3}\right)^{-1} & A_{3}^{-1}+A_{3}^{-1}A_{2}\left(A/A_{3}\right)^{-1}A_{2}^{*}A_{3}^{-1}
\end{pmatrix}.
\end{equation}

\section{Multiple inner products and $\protect\D$ invariant space \label{sec:Dinv}}

In quantum statistics, multiple inner products are used. 
The commutation operator $\D$ can link the multiple monotone metrics. 
{\color{black}
In general,
any inner product $(\cdot,\cdot)$
on a Hilbert space $\K$ with a fixed inner product $\left\langle \cdot,\cdot\right\rangle$
has a positive operator $S$ uniquely
such that 
$(v,w)=\left\langle v,S w \right\rangle$
for $v,w\in \K$.
}
For a strictly positive operator $S$ on the Hilbert
space $\K$, the following Lemma holds. 
\begin{lem}
\label{lem:genaral_pos}Let $\K_{0}\subset\K$ be a linear subspace
of $\K$, and let $\{e_{i}\}_{i=1}^{d}$ be a basis of $\K_{0}$.
The following conditions are equivalent:
\end{lem}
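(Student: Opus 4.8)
The plan is to prove the equivalence by a cycle of implications after fixing a single matrix model adapted to the basis. Concretely, I would complete $\{e_{i}\}_{i=1}^{d}$ to a basis of $\K$ and record two Gram matrices: $\Phi:=[\langle e_{i},e_{j}\rangle]$ for the fixed inner product and $\Psi:=[\langle e_{i},S e_{j}\rangle]$ for the inner product $(\cdot,\cdot)=\langle\cdot,S\,\cdot\rangle$ induced by $S$. Since $S$ is strictly positive, both are Hermitian and positive definite, so their $d\times d$ leading principal blocks — the Gram matrices of $\{e_{i}\}$ inside $\K_{0}$ for the two products — are themselves invertible. This positivity is the workhorse: it guarantees that the coordinate isomorphism $\K_{0}\cong\R^{d}$ carries each of the stated general-position conditions on $\K_{0}$ faithfully to an algebraic condition on the blocks of $\Psi$, with no degeneracy hidden in the diagonal.

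The second ingredient is the Schur complement recalled in Appendix \ref{sec:schur}. Writing $\Psi$ in block form relative to the decomposition $\K=\K_{0}\oplus\K_{0}^{\perp}$ (orthogonality in the fixed inner product), the off-diagonal block encodes exactly how $S\K_{0}$ leans out of $\K_{0}$, and the Schur complement of the complementary block packages the part of $S\K_{0}$ transverse to $\K_{0}$. I would translate each general-position statement about $\{e_{i}\}$ — that $S\K_{0}$ meets $\K_{0}^{\perp}$ only trivially, equivalently that a certain compressed operator built from $S$ is invertible, equivalently that an associated dual/biorthogonal system exists — into invertibility of this Schur complement or of a related block, and then close the loop $\text{(i)}\Rightarrow\text{(ii)}\Rightarrow\cdots\Rightarrow\text{(i)}$ by short rank computations, using the factorization of $\Psi$ through its Schur complement given in Appendix \ref{sec:schur}.

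The main obstacle I expect is keeping the two inner products straight: a vector lying in $\K_{0}^{\perp}$ for $\langle\cdot,\cdot\rangle$ is generally not $(\cdot,\cdot)$-orthogonal to $\K_{0}$, and the entire content of the lemma is precisely this mismatch, measured by the coupling block of $\Psi$. The danger is conflating the automatic positive-definiteness of the compression of $S$ to $\K_{0}$ (which holds for any strictly positive $S$) with the genuinely nontrivial general-position condition, which concerns the off-diagonal coupling rather than the diagonal. Once I fix the dictionary between the block structure of $\Psi$ and the geometry of $\K_{0}$ versus $S\K_{0}$, each implication should reduce to an elementary determinant-or-rank argument, and strict positivity of $S$ rules out the boundary cases, so the cycle closes cleanly.
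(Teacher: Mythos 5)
There is a genuine gap: your proposal never engages the actual pair of conditions the lemma asserts to be equivalent, and the conditions you substitute for them are automatically satisfied by every strictly positive $S$. The lemma's condition (i) is the invariance $S(\K_{0})=\K_{0}$, and condition (ii) is the exact identity $K^{(3)^{-1}}=K^{(2)^{-1}}K^{(1)}K^{(2)^{-1}}$, where $K^{(1)}_{ij}=\left\langle e_{i},Se_{j}\right\rangle$, $K^{(2)}_{ij}=\left\langle e_{i},e_{j}\right\rangle$ and, crucially, $K^{(3)}_{ij}=\left\langle e_{i},S^{-1}e_{j}\right\rangle$. Your matrix model records only $\Psi=[\left\langle e_{i},Se_{j}\right\rangle]$ on a completed basis, so the Gram matrix of $S^{-1}$ on $\K_{0}$ --- without which (ii) cannot even be written --- never enters. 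Worse, every condition in your intended cycle is vacuous: if $v\in\K_{0}$ and $Sv\perp\K_{0}$ then $\left\langle v,Sv\right\rangle=0$ forces $v=0$, so $S\K_{0}\cap\K_{0}^{\perp}=\{0\}$ always; the compression of $S$ to $\K_{0}$ is always positive definite; a dual/biorthogonal system always exists; and the Schur complement $A-B^{*}C^{-1}B$ of $\Psi=\begin{pmatrix}A & B^{*}\\ B & C\end{pmatrix}$ is always invertible. Hence your cycle either proves trivialities or, if it asserts equivalence with (i), proves a false statement: for $S=\begin{pmatrix}2 & 1\\ 1 & 2\end{pmatrix}$ on $\K=\R^{2}$ and $\K_{0}={\rm span}\{e_{1}\}$, all of your conditions hold while $S(\K_{0})\neq\K_{0}$. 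The lemma is a degeneracy statement --- an identity, equivalently the vanishing of a coupling block --- not a nondegeneracy statement, and strict positivity works against you here, since it makes all of your invertibility criteria hold unconditionally.

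For contrast, the paper's proof assembles the single $(2d)\times(2d)$ Gram matrix $K=\begin{pmatrix}K^{(1)} & K^{(2)}\\ K^{(2)} & K^{(3)}\end{pmatrix}$ of the combined family $\{S^{1/2}e_{i}\}_{i=1}^{d}\cup\{S^{-1/2}e_{i}\}_{i=1}^{d}$ and observes that (i) holds iff this family spans a space of dimension $d$, iff ${\rm rank}\,K=d$, iff the Schur complement $K/K^{(3)}=K^{(1)}-K^{(2)}K^{(3)^{-1}}K^{(2)}$ vanishes, which is exactly (ii). Your block decomposition could be repaired into a correct alternative along the following lines: reduce to an orthonormal basis adapted to $\K_{0}\oplus\K_{0}^{\perp}$ (the identity (ii) is covariant under change of basis, so this is harmless; then $K^{(2)}=I$ and $K^{(1)}=A$), and use the inverse formula of Appendix \ref{sec:schur} to get $K^{(3)}=(A-B^{*}C^{-1}B)^{-1}$, so that (ii) reads $B^{*}C^{-1}B=0$, i.e.\ $B=0$, i.e.\ (i). But this repair requires precisely the two steps absent from your sketch: expressing the Gram matrix of $S^{-1}$ on $\K_{0}$ through the Schur complement of $S$, and testing the Schur complement for \emph{equality} with the compression $A$ rather than for invertibility.
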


\begin{description}
\item [{(i)}] $S(\K_{0})=\K_{0}.$
\item [{(ii)}] $K^{(3)^{-1}}=K^{(2)^{-1}}K^{(1)}K^{(2)^{-1}}$, where $K^{(1)},K^{(2)},K^{(3)}$
are $d\times d$ matrix defined by $K_{ij}^{(1)}=\left\langle e_{i},Se_{j}\right\rangle $,
$K_{ij}^{(2)}=\left\langle e_{i},e_{j}\right\rangle $, $K_{ij}^{(3)}=\left\langle e_{i},S^{-1}e_{j}\right\rangle $. 
\end{description}
\begin{proof}
The Gram matrix of $\{Se_{i}\}_{i=1}^{d}\cup\{S^{-1}e_{i}\}_{i=1}^{d}$
with respect to the inner product $\left\langle \cdot,\cdot\right\rangle $
is
\begin{equation}
K=\begin{pmatrix}K^{(1)} & K^{(2)}\\
K^{(2)} & K^{(3)}
\end{pmatrix}.
\end{equation}
The condition (i) is equivalent to ${\rm rank}K=d$ because $\dim\left\{ {\rm span}\{Se_{i}\}_{i=1}^{d}\cup\{S^{-1}e_{i}\}_{i=1}^{d}\right\} =d$,
and is equivalent to
\begin{equation}
K/K^{(3)}=K^{(1)}-K^{(2)}K^{(3)^{-1}}K^{(2)}=0,
\end{equation}
where $K/K^{(3)}$ is the Schur complement given in Appendix \ref{sec:schur}. 
\end{proof}
By using Lemma \ref{lem:genaral_pos}, we can obtain a useful property
of $\D$ invariant space. Let $\S=\left\{ \rho_{\theta};\,\theta\in\Theta\subset\R^{d}\right\} $
be a smooth parametric family of density operators on a finite dimensional
Hilbert space $\H$. Let $\D_{\rho_{\theta_{0}}}:\B(\H)\to\B(\H)$
be the commutation operator with respect to a faithful state $\rho_{\theta_{0}}\in\S$.
Through a positive super operator $I+\ii\beta\D_{\rho_{\theta_{0}}}$,
the $\beta$ logarithmic derivatives $\{L_{i}^{(\beta)}\}_{i=1}^{d}$
and the corresponding inner product $\left\langle \cdot,\cdot\right\rangle _{\rho_{\theta_{0}}}^{(\beta)}$
are linked by (\ref{eq:bLD_SLD}) and (\ref{eq:bLD_SLD_inner}). From
these relations, we have the following lemma.
\begin{lem}
\label{lem:Dinv}The following conditions are equivalent:

\noindent\begin{minipage}[t]{1\columnwidth}%
\begin{description}
\item [{(i)}] ${\rm span_{\R}}\left\{ \partial_{i}\rho_{\theta_{0}}\right\} _{i=1}^{d}$
is $\D_{\rho_{\theta_{0}}}$ invariant. 
\item [{(ii)}] ${\rm span_{\R}}\left\{ L_{i}^{(S)}\right\} _{i=1}^{d}$
is $\D_{\rho_{\theta_{0}}}$ invariant. 
\item [{(ii)'}] ${\rm span_{\C}}\left\{ L_{i}^{(S)}\right\} _{i=1}^{d}$
is $\D_{\rho_{\theta_{0}}}$ invariant. 
\item [{(iii)}] ${\rm span_{\C}}\left\{ L_{i}^{(S)}\right\} _{i=1}^{d}={\rm span_{\C}}\left\{ L_{i}^{(\beta)}\right\} _{i=1}^{d}$
for any $\beta\in[0,1].$
\item [{(iv)}] $J_{\theta_{0}}^{(\beta)^{-1}}=J_{\theta_{0}}^{(S)^{-1}}\left({\rm Re}Z+\beta\ii{\rm Im}Z\right)J_{\theta_{0}}^{(S)^{-1}}$
for any $\beta\in[0,1]$, where $Z=\left[\Tr L_{i}^{(S)}\rho_{\theta_{0}}L_{j}^{(S)}\right]_{ij}$
is a $d\times d$ matrix. 
\end{description}
\end{minipage}
\end{lem}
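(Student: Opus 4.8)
The plan is to route every condition through the $\D_{\rho_{\theta_0}}$ invariance of $W:={\rm span}_{\C}\{L_i^{(S)}\}_{i=1}^{d}$ and to extract the matrix identity (iv) by applying Lemma \ref{lem:genaral_pos} to the strictly positive super operator $S=I+\ii\D_{\rho_{\theta_0}}$. First I would settle (i) $\Leftrightarrow$ (ii). Since $\partial_i\rho_{\theta_0}=\frac12(\bL_{\rho_{\theta_0}}+\bR_{\rho_{\theta_0}})(L_i^{(S)})$ and $\bL_{\rho_{\theta_0}}+\bR_{\rho_{\theta_0}}$ is invertible for a faithful $\rho_{\theta_0}$ and commutes with $\D_{\rho_{\theta_0}}$ (both are functions of the commuting pair $\bL_{\rho_{\theta_0}},\bR_{\rho_{\theta_0}}$), a real subspace is $\D_{\rho_{\theta_0}}$ invariant if and only if its image under $\bL_{\rho_{\theta_0}}+\bR_{\rho_{\theta_0}}$ is, which gives (i) $\Leftrightarrow$ (ii) at once.

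For (ii) $\Leftrightarrow$ (ii)$'$ I would use that $\D_{\rho_{\theta_0}}$ preserves Hermiticity. The direction (ii) $\Rightarrow$ (ii)$'$ is immediate by $\C$-linearity. For the converse, writing $\D_{\rho_{\theta_0}}(L_i^{(S)})=\sum_j c_{ij}L_j^{(S)}$ and taking adjoints gives $\sum_j(c_{ij}-\overline{c_{ij}})L_j^{(S)}=0$; because the Hermitian operators $\{L_j^{(S)}\}$ are $\R$-linearly independent they are automatically $\C$-linearly independent (a complex relation splits into its Hermitian real and imaginary parts), forcing $c_{ij}\in\R$. For (ii)$'$ $\Leftrightarrow$ (iii) I invoke \eqref{eq:bLD_SLD}, $L_i^{(\beta)}=(I+\beta\ii\D_{\rho_{\theta_0}})^{-1}(L_i^{(S)})$: if $W$ is $\D_{\rho_{\theta_0}}$ invariant then, $W$ being finite dimensional, $I+\beta\ii\D_{\rho_{\theta_0}}$ restricts to an invertible map of $W$, so $\{L_i^{(\beta)}\}$ is again a basis of $W$ and (iii) holds for all $\beta$; conversely, fixing any $\beta\in(0,1]$ the same relation yields $\ii\D_{\rho_{\theta_0}}L_i^{(\beta)}=\beta^{-1}(L_i^{(S)}-L_i^{(\beta)})$, so if both families span $W$ then $\D_{\rho_{\theta_0}}$ carries a basis of $W$ into $W$, i.e. (ii)$'$.

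The core is (ii)$'$ $\Leftrightarrow$ (iv). I would apply Lemma \ref{lem:genaral_pos} on $\K=\B(\H)$ with $\langle\cdot,\cdot\rangle^{(0)}$, $\K_0=W$, $e_i=L_i^{(S)}$, and $S=I+\ii\D_{\rho_{\theta_0}}$, which is strictly positive because $\ii\D_{\rho_{\theta_0}}$ is self-adjoint for $\langle\cdot,\cdot\rangle^{(0)}$ with operator norm $<1$. From \eqref{eq:beta_inner} and \eqref{eq:bLD_SLD_inner} one identifies $K^{(1)}=Z$, $K^{(2)}=J_{\theta_0}^{(S)}$, and $K^{(3)}=[\langle L_i^{(S)},(I+\ii\D_{\rho_{\theta_0}})^{-1}L_j^{(S)}\rangle^{(0)}]=J_{\theta_0}^{(1)}=J_{\theta_0}^{(R)}$, so the lemma's condition reads $J_{\theta_0}^{(R)^{-1}}=J_{\theta_0}^{(S)^{-1}}ZJ_{\theta_0}^{(S)^{-1}}$, which is exactly (iv) at $\beta=1$; since $S(W)=W$ is equivalent to $\D_{\rho_{\theta_0}}$ invariance of $W$, this gives (ii)$'$ $\Leftrightarrow$ [(iv) at $\beta=1$]. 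To propagate to every $\beta$ I assume $\D_{\rho_{\theta_0}}$ invariance, let $\Delta$ be the matrix of $\ii\D_{\rho_{\theta_0}}|_W$ in the basis $\{L_i^{(S)}\}$, and read off from \eqref{eq:bLD_SLD_inner} the identities $J_{\theta_0}^{(\beta)}=J_{\theta_0}^{(S)}(I+\beta\Delta)^{-1}$ and $Z=J_{\theta_0}^{(S)}(I+\Delta)$. The general identity ${\rm Re}Z=J_{\theta_0}^{(S)}$, which follows directly from the SLD relation, then forces ${\rm Re}\Delta=0$, so $\beta\Delta=\beta\ii J_{\theta_0}^{(S)^{-1}}{\rm Im}Z$ and $J_{\theta_0}^{(\beta)^{-1}}=(I+\beta\Delta)J_{\theta_0}^{(S)^{-1}}=J_{\theta_0}^{(S)^{-1}}({\rm Re}Z+\beta\ii{\rm Im}Z)J_{\theta_0}^{(S)^{-1}}$, which is (iv); the converse is trivial upon specializing (iv) to $\beta=1$.

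I expect the bookkeeping in this last step to be the only real obstacle: pinning down the three Gram matrices $K^{(1)},K^{(2)},K^{(3)}$ as $Z$, $J_{\theta_0}^{(S)}$, $J_{\theta_0}^{(R)}$ through \eqref{eq:bLD_SLD_inner}, and verifying ${\rm Re}\Delta=0$ so that the single identity at $\beta=1$ supplied by Lemma \ref{lem:genaral_pos} lifts to the whole one-parameter family claimed in (iv). The remaining equivalences are structural and should follow cleanly from invertibility and commutativity of the super operators together with the Hermiticity-preserving property of $\D_{\rho_{\theta_0}}$.
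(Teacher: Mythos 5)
Your proposal is correct and follows essentially the same route as the paper: the commutativity of $\D_{\rho_{\theta_0}}$ with $(\bL_{\rho_{\theta_0}}+\bR_{\rho_{\theta_0}})^{-1}$ for (i)$\Leftrightarrow$(ii), Hermiticity preservation for (ii)$\Leftrightarrow$(ii)$'$, the relation $L_i^{(\beta)}=(I+\beta\ii\D_{\rho_{\theta_0}})^{-1}L_i^{(S)}$ for (ii)$'\Leftrightarrow$(iii), and Lemma \ref{lem:genaral_pos} applied to the Gram matrices for the equivalence with (iv). The only (harmless) deviation is that you invoke Lemma \ref{lem:genaral_pos} solely at $\beta=1$ and then propagate to all $\beta$ via the matrix $\Delta$ of $\ii\D_{\rho_{\theta_0}}|_W$, whereas the paper applies the lemma with $S=I+\ii\beta\D_{\rho_{\theta_0}}$ for each $\beta$ directly.
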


\begin{proof}
At first, let us prove (i)$\Leftrightarrow$(ii). Because super operators
$\D_{\rho_{\theta_{0}}}$ and $\left(\frac{\bL_{\rho_{\theta_{0}}}+\bR_{\rho_{\theta_{0}}}}{2}\right)^{-1}$
are commutative,
\begin{equation}
\left(\frac{\bL_{\rho_{\theta_{0}}}+\bR_{\rho_{\theta_{0}}}}{2}\right)^{-1}\circ\D_{\rho_{\theta_{0}}}\left(\partial_{i}\rho_{\theta_{0}}\right)=\D_{\rho_{\theta_{0}}}\circ\left(\frac{\bL_{\rho_{\theta_{0}}}+\bR_{\rho_{\theta_{0}}}}{2}\right)^{-1}\left(\partial_{i}\rho_{\theta_{0}}\right)=\D_{\rho_{\theta_{0}}}\left(L_{i}^{(S)}\right).
\end{equation}
Therefore $\D_{\rho_{\theta_{0}}}\left(\partial_{k}\rho_{\theta_{0}}\right)\in{\rm span_{\R}}\left\{ \partial_{i}\rho_{\theta_{0}}\right\} _{i=1}^{d}$
if and only if $\D_{\rho_{\theta_{0}}}\left(L_{k}^{(S)}\right)\in{\rm span_{\R}}\left\{ L_{i}^{(S)}\right\} _{i=1}^{d}$.

The proof of (ii)$\Leftrightarrow$(ii)' is trivial because $\D_{\rho_{\theta_{0}}}(X)$
is self-adjoint for any self-adjoint operator $X$.

The proof of (ii)'$\Leftrightarrow$(iii) is also trivial because
of (\ref{eq:bLD_SLD}). 

The proof of (iii)$\Leftrightarrow$(iv) is given 
by Lemma \ref{lem:genaral_pos}
{\color{black}
with a positive operator $I+\ii\beta\D_{\rho_{\theta_{0}}}$
}
because
\begin{align}
{\rm Re}Z_{ij}+\beta\ii{\rm Im}Z_{ij} & =\left\langle L_{i}^{(S)},(I+\ii\beta\D_{\rho_{\theta_{0}}})L_{j}^{(S)}\right\rangle ^{(0)},\\
J_{\theta_{0},ij}^{(S)} & =\left\langle L_{i}^{(S)},L_{j}^{(S)}\right\rangle ^{(0)},\\
J_{\theta_{0},ij}^{(\beta)} & =\left\langle L_{i}^{(S)},
(I+\ii\beta\D_{\rho_{\theta_{0}}})^{-1}L_{j}^{(S)}\right\rangle ^{(0)}.
\end{align}
\end{proof}

{\color{black}
\section{Relation between the bounds (\ref{eq:suzuki_bound_ori}) and (\ref{eq:mld_expli})
\label{sec:suzuki_mld}}

In this appendix, we show that the explicit form (\ref{eq:suzuki_bound_ori})
given for $\dim\H=2$ and $d=2$ can be derived from (\ref{eq:mld_expli}). 

The inequality condition in (\ref{eq:mld_expli})
\begin{equation}
\hat{\beta}=\frac{\Tr\left|\sqrt{G}{\rm Im}J_{\theta_{0}}^{(R)^{-1}}\sqrt{G}\right|}{2\Tr G\left\{ J_{\theta_{0}}^{(S)^{-1}}-{\rm Re}(J_{\theta_{0}}^{(R)^{-1}})\right\} }\geq1
\end{equation}
can be transformed into an inequality
\begin{equation}
2\Tr G{\rm Re}(J_{\theta_{0}}^{(R)^{-1}})+2\Tr\left|\sqrt{G}{\rm Im}J_{\theta_{0}}^{(R)^{-1}}\sqrt{G}\right|\geq2\Tr GJ_{\theta_{0}}^{(S)^{-1}}+\left|\sqrt{G}{\rm Im}J_{\theta_{0}}^{(R)^{-1}}\sqrt{G}\right|.\label{eq:suzuki_mld1}
\end{equation}
The left hand side of (\ref{eq:suzuki_mld1}) is equal to $2C_{\theta_{0},G}^{(R)}$.
The right hand side of (\ref{eq:suzuki_mld1}) is equal to $C_{\theta_{0},G}^{(S)}+C_{\theta_{0},G}^{(Z)}$
because
\begin{equation}
\Tr GJ_{\theta_{0}}^{(S)^{-1}}+\left|\sqrt{G}{\rm Im}J_{\theta_{0}}^{(R)^{-1}}\sqrt{G}\right|=\Tr GZ(L^{(S)})+\Tr\left|\sqrt{G}{\rm Im}Z(L^{(S)})\sqrt{G}\right|=C_{\theta_{0},G}^{(Z)},
\end{equation}
where
\begin{equation}
Z(L^{(S)})=A=J_{\theta_{0}}^{(S)^{-1}}+\sqrt{-1}{\rm Im}(J_{\theta_{0}}^{(R)^{-1}})
\end{equation}
given in (\ref{eq:A_relation}) is used. Therefore the inequality
(\ref{eq:suzuki_mld1}) is equivalent to the inequality condition
\begin{equation}
C_{\theta_{0},G}^{(R)}\geq\frac{C_{\theta_{0},G}^{(Z)}+C_{\theta_{0},G}^{(S)}}{2}
\end{equation}
in (\ref{eq:suzuki_bound_ori}).

Further, by using (\ref{eq:bLD_Ab}), (\ref{eq:A_relation}), and
(\ref{eq:b_relation}), we have

\begin{align}
C_{\theta_{0},G}^{(\hat{\beta})} & =\Tr G{\rm Re}A+\hat{\beta}\Tr\left|\sqrt{G}{\rm Im}A\sqrt{G}\right|-\hat{\beta}^{2}\bra bG\ket b\\
 & =\Tr G{\rm Re}A+\frac{\Tr\left|\sqrt{G}{\rm Im}A\sqrt{G}\right|}{2\bra bG\ket b}\Tr\left|\sqrt{G}{\rm Im}A\sqrt{G}\right|-\left(\frac{\Tr\left|\sqrt{G}{\rm Im}A\sqrt{G}\right|}{2\bra bG\ket b}\right)^{2}\bra bG\ket b\\
 & =\Tr G{\rm Re}A+\frac{\left(\Tr\left|\sqrt{G}{\rm Im}A\sqrt{G}\right|\right)^{2}}{4\bra bG\ket b}\\
 & =\Tr G{\rm Re}A-\bra bG\ket b+\Tr\left|\sqrt{G}{\rm Im}A\sqrt{G}\right|+\frac{\left(\Tr\left|\sqrt{G}{\rm Im}A\sqrt{G}\right|\right)^{2}}{4\bra bG\ket b}-\Tr\left|\sqrt{G}{\rm Im}A\sqrt{G}\right|+\bra bG\ket b\\
 & =C_{\theta_{0},G}^{(R)}+\frac{\left(\Tr\left|\sqrt{G}{\rm Im}A\sqrt{G}\right|-2\bra bG\ket b\right)^{2}}{4\bra bG\ket b}\\
 & =C_{\theta_{0},G}^{(R)}+\frac{\left(\frac{1}{2}\left(2\Tr G{\rm Re}A+\Tr\left|\sqrt{G}{\rm Im}A\sqrt{G}\right|\right)-\left(\Tr G{\rm Re}A+\bra bG\ket b\right)\right)^{2}}{\bra bG\ket b}\\
 & =C_{\theta_{0},G}^{(R)}+\frac{\left[\frac{1}{2}(C_{\theta_{0},G}^{(Z)}+C_{\theta_{0},G}^{(S)})-C_{\theta_{0},G}^{(R)}\right]^{2}}{C_{\theta_{0},G}^{(Z)}-C_{\theta_{0},G}^{(R)}}=C_{\theta_{0},G}^{(R)}+S_{\theta_{0},G}.
\end{align}
Thus, it is confirmed that (\ref{eq:suzuki_bound_ori}) can be derived
from (\ref{eq:mld_expli}).
}

\end{document}